\newtheorem{thm}{Theorem}[section]
\newtheorem{lem}[thm]{Lemma}
\newtheorem{prop}[thm]{Proposition}
\theoremstyle{definition}
\theoremstyle{remark}
\def\beq{\begin{eqnarray}}
\def\eeq{\end{eqnarray}}
\def\bsp{\begin{split}}	
\def\esp{\end{split}}
\newcommand{\be}{\begin{equation}}
\newcommand{\ee}{\end{equation}}
\newcommand{\al}{\alpha}
\newcommand{\bt}{\beta}
\def \hbm #1 {\mbox{\boldmath{$\hat m^{(#1)}$}}}
\def \bm #1 {\mbox{\boldmath{$m^{(#1)}$}}}
\def \BDM {\begin{displaymath}}
\def \EDM {\end{displaymath}}
\def\bfn{{\bf n}}
\begin{document}

\title{Isometries in Higher Dimensional $CCNV$ Spacetimes}
\author{{\large\textbf{D. McNutt$^{1}$,  A. Coley$^{1}$, N. Pelavas$^{1}$}}
\vspace{0.3cm} \\
$^{1}$Department of Mathematics and Statistics,\\
Dalhousie University,\\
Halifax, Nova Scotia,\\
Canada B3H 3J5
\vspace{0.5cm}\\
\texttt{ddmcnutt@dal.ca, aac@mathstat.dal.ca, pelavas@mathstat.dal.ca} }
\date{\today}
\maketitle
\pagestyle{fancy}
\fancyhead{}
\fancyhead[CE]{D. McNutt, A. Coley, N. Pelavas}
\fancyhead[LE,RO]{\thepage}
\fancyhead[CO]{Isometries in Higher Dimensional CCNV Spacetimes}
\fancyfoot{}

\begin{abstract}

We study the class of higher-dimensional Kundt metrics admitting a covariantly constant null vector, known as $CCNV$ spacetimes. We pay particular attention to those $CCNV$ spacetimes with constant (polynomial) curvature invariants ($CSI$). We investigate the existence of an additional isometry in $CCNV$ spacetimes, by studying the Killing equations for the general form of the $CCNV$ metric. In particular, we list all $CCNV$ spacetimes allowing an additional non-spacelike isometry for all values of the lightcone coordinate $v$, which are of interest due to the invariance of the metric under a translation in $v$. As an application we use our results to find all $CSI$ $CCNV$ spacetimes with an additional isometry as well as the subset of these spacetimes in which the isometry is non-spacelike for all values $v$.

\end{abstract}


\begin{section}{Introduction}

In this paper we will study the N-dimensional $CCNV$ spacetimes, which are defined as metrics admitting a covariantly constant null vector ($CCNV$). We say a spacetime is a constant scalar invariant spacetime ($CSI$) if all of the polynomial scalar invariants constructed from the Riemann tensor and its covariant derivatives are constant. Those spacetimes for which all of these polynomial scalar invariants vanish will be called  $VSI$. In particular, we discuss the subset of $CCNV$ spacetimes which are also $CSI$ or $VSI$.

Previously it was shown that the higher-dimensional  $VSI$  spacetimes with fluxes and dilaton are solutions of type IIB supergravity \cite{VSISUG}. Those $VSI$ spacetimes in which supersymmetry is preserved admit a $CCNV$ as such they are $CCNV$ spacetimes.  Similarly $CSI$ spacetimes have been shown to be solutions of supergravity \cite{CFH} as well, however there is less known about the conditions these $CSI$ spacetimes must satisfy in order to preserve some supersymmetry. Unlike $VSI$ spacetimes, the $CSI$ spacetimes do not have a canonical form for the metric tensor.

A necessary (but not sufficient) condition for supersymmetry to be preserved is that the spacetime admits a Killing spinor and hence a null or timelike Killing vector. In the case of $CCNV$ spacetimes we know their metrics must belong to the Kundt class because they contain a null vector which is geodesic, non-expanding, shear-free and non-twisting; namely, the covariantly constant null vector $\ell$. Using this fact, we will study the curvature of a general $CCNV$ spacetime, from which we will find conditions on the transverse space in order for a $CCNV$ spacetime to be $CSI$.

With the connection components determined, we examine the existence of an additional Killing vector in a general $CCNV$ spacetime. Supposing it admits another Killing vector we find constraints on the metric functions $H$, $W_e$, $m_{ie}$ and the Killing vector components. The result of this analysis determines the three general forms an additional Killing vector can take in a $CCNV$ spacetime, we consider the Killing Lie Algebra generated by $\ell$ and $X$. Lastly we require the Killing vector to be null or timelike for all values of $v$; this eliminates the more general cases, leaving only Killing vectors whose component along the null vector $n$ is constant. We use this new result to find timelike or null Killing vectors in a $CSI$ $CCNV$ spacetime.

\end{section}

\begin{section}{$CSI$, $VSI$ and $CCNV$ spacetimes}

\begin{subsection}{Kundt Metrics and $CCNV$ spacetimes} \label{KundtCCNVsect}

Given a spacetime possessing a covariantly constant null covector $\ell_a$, it must be geodesic, non-expanding, shear-free and non-twisting

\beq &\ell^\alpha \ell_{\beta;\alpha} = \ell^\alpha_{~;\alpha} = \ell^{\alpha;\beta}\ell_{(\alpha;\beta)} = \ell^{\alpha;\beta}\ell_{[\alpha;\beta]} = 0. & \label{Kundtvect} \eeq

The null vector will be surface forming since it is geodesic and twist-free, thus there exists locally an exact null one-form $\ell = du$. Choosing local coordinates $(u,v,x^e)$ where $\ell = du$, we use a coordinate transformation and a null rotation to turn the metric into the form

\begin{equation}
ds^2 = 2 du \left(d v+H(u, v, x^e) du+ \hat W_{ e}(u,v,x^f) dx^e\right) + g_{ef}(u,v, x^g) dx^e dx^f. \nonumber
\end{equation}

\noindent The conditions that $\ell$ is non-expanding and shear-free leads to the constraint that $\tilde{g}_{ef,v} = 0$ and so we are left with

\beq & d s^2=2 du \left(d v+H d u+ \hat W_{ e} d x^e\right)+ g_{ef}(u,x^g) dx^e d x^f. \label{Kundt} & \eeq

\noindent Metrics of this form possess a null vector satisfying \eqref{Kundtvect} and will be called higher-dimensional Kundt metrics since these generalize the four-dimensional Kundt metrics \cite{CSI}.


The coordinate transformations that preserve the Kundt form of the metric are:

\begin{enumerate}

\item{} $(u',v',x'^e)=(u,v,f^e(u;{x}^f))$, and $J^e_{~f}\equiv
\frac{\partial f^e}{\partial x^f}$.

\beq
& H'= H+g_{ef}f^e_{,u}f^f_{,u}- \hat W_f\left(J^{-1}\right)^f_{~e}f^e_{,u} ~~~  \hat W'_e =  \hat W_f\left(J^{-1}\right)^f_{~e}-g_{ef}f^f_{,u}& \nonumber \\
&\tilde{g}'_{ef} = \tilde{g}_{hl}\left(J^{-1}\right)^h_{~e}\left(J^{-1}\right)^l_{~f} &\nonumber \eeq

\item{} $(u',v',x'^e)=(u,v+h(u,x^g),x^e)$

\beq H'= H-h_{,u} ~~~  \hat W'_e =  \hat W_e-h_{,e} ~~~ g_{ef}' = g_{ef} \nonumber \eeq

\item{} $(u',v',x'^e)=(g(u), v/g_{,u}(u),x^e)$

\beq H' = \frac 1{g_{,u}^2}\left(H+v\frac{g_{,uu}}{g_{,u}}\right) ~~~  \hat W'_e = \frac 1{g_{,u}} \hat W_e ~~~
\tilde{g}'_{ef} = g_{ef}.\nonumber \eeq

\end{enumerate}

Choosing the coframe $\{ m^a \}$

\beq m^1 = n = dv + Hdu +  \hat W_e dx^e, ~~~ m^2 = \ell ~~~ m^i = m^i_{~e} dx^e \eeq

\noindent where  $m^i_{~e} m_{if} = g_{ef}$ and $m_{ie}m_j^{~e} = \delta_{ij}$, we may rewrite the connection coefficients $L_{ab} = \ell_{(a;b)}$ as

\beq \ell_{(\alpha;\beta)} = L_{11} \ell_{\alpha} \ell_{\beta} + L_{1i} (\ell_{\alpha} m^i_{~\beta} + m^i_{~\alpha} \ell_{\beta}) \label{spincoef}. \eeq

\noindent Since $\ell$ is covariantly constant it is immediately a Killing vector and so  $L_{11}$ and $ L_{1i} $ vanish. Setting $L_{11}$ to zero we find $H_{,v} = 0 .$ In order to find constraints from $L_{1i} = 0$ we must assume    that $m^i_{~e}$ is upper-triangular - this is always possible in thanks to the QR decomposition. Setting $L_{13} = 0$ we find that $ \hat W_{3,v} = 0$. Using this we can inductively show that $\hat W_{i,v} = 0 .$ Thus we find that in order for $\ell$ to be a Killing vector the metric must be independent of the light-cone coordinate $v$, and our metric takes the form
%
\begin{equation}
ds^2=2 du (d v+H(u,x^e)d u+ \hat W_{ e}(u,x^f)d x^e)+ {g}_{ef}(u,x^g) dx^e dx^f.  \label{CCNVKundt}
\end{equation}

\noindent Due to the $v$-independence, $\ell = \partial_v$ must be a covariantly constant null vector.

In general, if there exists a null vector $\ell$ that is a $CCNV$ then from the Ricci identity $ \ell^\alpha R_{\alpha \beta \gamma \delta} = 0. $ The Riemann tensor must be of type II or less and have the following boost-weight decomposition in the aforementioned coframe

\beq R_{\alpha\beta\gamma\delta} &=& R_{ijkl}m^{i}_{ \{ \alpha}m^{j}_{ \beta}m^{k}_{ \gamma}m^{l}_{ \delta\}} +
4R_{2jkl}\ell_{\{ \alpha}m^{j}_{ \beta}m^{k}_{ \gamma}m^{l}_{ \delta \}} \nonumber \\
&~&+  4R_{2j2l}\ell_{ \{\alpha}m^{j}_{ \beta}\ell_{\gamma}m^{l}_{ \delta\}}  .   \label{Riemdecomp} \eeq

\end{subsection}

\begin{subsection}{$CSI$  Spacetimes}

We call a Lorentzian manifold for which all scalar curvature invariants constructed from the Riemann tensor and its covariant derivatives are constant a $CSI$ spacetime, while those Lorentzian manifolds with vanishing scalar curvature invariants will be called $VSI$ spacetimes.  Clearly $VSI \subset CSI$, and the set of all locally homogeneous spacetimes, H, is another subset of the $CSI$ spacetimes.

Let us denote by ${CSI_R}$ the set of all reducible $CSI$ spacetimes that can be built from $VSI$ and $H$ by (i) warped products (ii) fibered products, and (iii) tensor sums. Similarly we denote by ${CSI_F}$ those spacetimes for which there exists a frame with a null vector $\ell$ such that all components of the Riemann tensor and its covariant derivatives in this frame have the property that (i) all positive boost weight components (with respect to $\ell$) are zero and (ii) all zero boost weight components are constant. Finally, we denote by ${CSI_K}$, those $CSI$  spacetimes that belong to the (higher-dimensional) Kundt class \eqref{Kundt}, the so-called Kundt $CSI$  spacetimes. By construction ${CSI_R}$ is at least of Weyl type $II$, and by definition ${CSI_F}$ and ${CSI_K}$ are at least of Weyl type $II$.

In the Riemannian case, a manifold with constant scalar invariants is immediately homogeneous, $(CSI \equiv H)$. This is not true for Lorentzian manifolds; we may find examples of $CSI$  spacetimes which consists of combinations of  homogeneous spaces and a certain subclass of Kundt spacetimes in \cite{CSI}. Interestingly, for every $CSI$ spacetime there is a homogeneous spacetime (not necessarily unique) with precisely the same constant invariants.  This suggests that $CSI$  spacetimes can be constructed from homogeneous spacetimes $H$ and $VSI$ spacetimes using warped and fibered products (e.g., ${CSI_R}$).

In \cite{CSI} it is conjectured that a spacetime is $CSI$, if and only if, there exists a null frame in which the Riemann tensor and its derivatives can be brought into one of the following forms: either

\begin{enumerate}
\item{} the Riemann tensor and its derivatives are constant, in which case we have a locally homogeneous space, or \item{} the Riemann tensor and its derivatives are of boost order zero with constant boost weight zero components at each order.  This implies that Riemann tensor is of type II or less.
\end{enumerate}

\noindent Assuming that the above conjecture is correct and that there exists such a preferred null frame, it was then conjectured that if a spacetime is $CSI$, then the spacetime is either locally homogeneous or belongs to the higher dimensional Kundt $CSI$  class, and if a spacetime is $CSI$ it can be constructed from locally homogeneous spaces and $VSI$ spacetimes. This construction can be done by means of fibering, warping and tensor sums. From the results above and these conjectures, it is plausible that for $CSI$  spacetimes that are not locally homogeneous, the Weyl type is $II$, $III$, $N$ or $O$, and that all boost weight zero terms are constant.

\end{subsection}

\begin{subsection}{$VSI$ Spacetimes}

There is more known about the subclass of $CSI$  spacetimes in which all curvature invariants vanish, the $VSI$ spacetimes. From \cite{Higher} we have the following theorem:

\begin{thm}[$VSI$ Theorem] All curvature invariants of all orders vanish in an N-dimensional Lorentzian spacetime if and only if there exists an aligned, non-expanding, non-twisting, shear-free, geodesic null direction $\ell^\al$ along which the Riemann tensor has negative boost order.
\end{thm}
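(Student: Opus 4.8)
The statement is an equivalence, so I plan to establish the two implications separately, using the boost-weight grading induced by the null frame $\{m^a\}$ as the organizing tool. Recall that a boost assigns weight $+1$ to an $\ell$ index, $-1$ to the $n$ index, and $0$ to each transverse $m^i$ index, so that every frame component of a tensor carries a definite boost weight equal to the sum over its indices, and the boost order is the maximal such weight among nonzero components; ``negative boost order'' means this maximum is strictly negative. For sufficiency, assume $\ell^\al$ is geodesic, non-expanding, shear-free and non-twisting, and that the Riemann tensor has negative boost order. First I would observe that a full contraction preserves total boost weight: pairing two transverse indices with $g^{ij}=\delta^{ij}$ is weight-neutral, while the only other nonzero inverse-metric component pairs a $+1$ index with a $-1$ index. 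Hence any polynomial scalar invariant, being genuinely boost-invariant, is a sum of fully contracted monomials each of total weight zero. Next I would show that the optical conditions \eqref{Kundtvect} guarantee that $\nabla$ does not raise boost order, since the only connection coefficients capable of increasing a component's weight are precisely those encoding the expansion, shear and twist of $\ell$, which vanish by hypothesis; thus each $\nabla^{(k)}R$ again has strictly negative boost order. Every monomial in an invariant is then a product of factors of strictly negative weight, so its total weight is strictly negative, contradicting weight zero unless it vanishes. Therefore all invariants of all orders vanish.

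For necessity, assume all invariants vanish; the goal is to produce an aligned null direction with the stated optical properties and negative boost order. I would proceed in two stages. Algebraically, I would invoke the alignment/boost-weight theory to argue that the vanishing of all polynomial invariants forces the curvature operators to be nilpotent, equivalently the existence of a null frame in which the Riemann tensor has no positive boost weight components and vanishing boost-weight-zero components; in the notation of \eqref{Riemdecomp} this amounts to killing the $R_{ijkl}$ piece so that only the weight $-1$ and $-2$ terms survive (Weyl type III, N or O). The subtle point is to secure a single null direction that serves the Riemann tensor and all of its covariant derivatives simultaneously. Geometrically, I would then show this aligned direction is geodesic, non-expanding, shear-free and non-twisting: were any optical scalar nonzero, it would reappear in a nonvanishing invariant built from $\nabla R$ or higher derivatives, contradicting the hypothesis. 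Assembling these, $\ell^\al$ is the required null direction.

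The hard part is the necessity direction, and within it the passage from ``all scalar invariants vanish'' to ``negative boost order in a frame adapted to a single geodesic, non-expanding, shear-free, non-twisting null congruence.'' Establishing nilpotency of the curvature from the vanishing invariants, and then propagating one preferred null direction through all orders of differentiation while simultaneously forcing the optical scalars to vanish, is where the essential work lies; by contrast the sufficiency direction reduces to the boost-weight bookkeeping above.
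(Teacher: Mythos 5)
First, note that the paper does not prove this statement: it is quoted as the ``$VSI$ Theorem'' from \cite{Higher}, so there is no in-paper proof to compare against and your attempt has to be measured against the argument in that reference. Judged on its own terms, your proposal is a plan rather than a proof, and both halves have genuine gaps. In the sufficiency direction the bookkeeping about full contractions having total boost weight zero is fine, but the pivotal claim --- that the optical conditions alone guarantee $\nabla$ does not raise boost order because ``the only connection coefficients capable of increasing a component's weight are precisely those encoding the expansion, shear and twist'' --- is not correct as stated. The dangerous term is not a connection coefficient at all: it is the directional derivative along $\ell$ itself, which raises the boost weight of any frame component by one. In the paper's notation this is the $n_{\epsilon}D_{1}R_{abcd}$ term that would otherwise appear in \eqref{nabfriem}; in Section 4 it is killed by the $v$-independence of the $CCNV$ metric, a luxury you do not have in the general $VSI$ setting. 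Consequently $\nabla R$ can a priori acquire boost weight zero components such as $\ell^{\epsilon}\nabla_{\epsilon}R_{2jkl}$, and one must prove that these vanish. The proof in \cite{Higher} does this with the machinery of \emph{balanced scalars}: a scalar of boost weight $b<0$ is called balanced if its $(-b)$-fold derivative along $\ell$ vanishes, and one shows (i) that the stated hypotheses together with the Bianchi identities make the Riemann tensor a balanced tensor, and (ii) that covariant derivatives and products of balanced tensors are again balanced, so that every $\nabla^{k}R$ has negative boost order. That is the real content of the sufficiency direction and it is absent from your sketch.

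For the necessity direction you correctly identify it as the hard part, but you then only describe what would have to be shown --- nilpotency of the curvature operators, existence of a single null direction aligned with $R$ and with all of its covariant derivatives simultaneously, and the vanishing of the optical scalars --- without giving an argument for any of these steps. In particular, the passage from ``all polynomial invariants vanish'' to the existence of a common aligned null direction rests on the alignment theorem and an induction over derivative order in \cite{Higher}, none of which is reproduced here or invoked as a precise black box; and the claim that a nonzero optical scalar ``would reappear in a nonvanishing invariant'' is exactly the assertion that needs proof, not a proof. As it stands, neither implication is established.
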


\noindent To be precise, there exists a null vector $\ell$ such that

\beq \ell_{\al;\bt} = L_{11} \ell_\al \ell_\bt + L_{1i} \ell_\al m^i_{~\bt} + L_{i1} m^i_{~\al} \ell_\bt. \nonumber  \eeq

If we choose a frame including $\ell$ as a basis vector, the Riemann tensor will be of type III, N or O, this implies that all $VSI$ spacetimes belong to the generalized Kundt class. In fact it was shown in \cite{CFHP} that there is a canonical form for the metrics of $VSI$ spacetimes:

\beq & d s^2=2 du (dv+H du +  \hat W_{ e} dx^e)+ \delta_{ef} dx^e dx^f, \label{VSIKundt} & \eeq

\noindent where the metric functions satisfy certain constraints that will be discussed in the next subsection. The authors of \cite{CFHP} go further by classifying all higher-dimensional $VSI$ spacetimes according to their Weyl type, Ricci-type and whether $\hat{W}_3$ has $v$-dependence or not.

\end{subsection}

\begin{subsection}{Supergravity and $CSI$ Spacetimes}

A subset of Ricci type N $VSI$ spacetimes, the higher-dimensional pp-wave spacetimes, have been studied  in the literature, and are known to be exact solutions in string theory \cite{string, hortseyt, tseytlin}, in type IIB supergravity with an R-R five-form \cite{matsaev},  and with  NS-NS form fields as well \cite{RT}. The pp-wave spacetimes are of Weyl type N. However there are Ricci type N solutions of Weyl type III, like the string gyratons and in fact all Ricci type N $VSI$ spacetimes are solutions to supergravity \cite{VSISUG}.

Moreover in \cite{VSISUG} it was shown that there are $VSI$ spacetime solutions of type IIB supergravity which are of Ricci type III, assuming appropriate source fields are provided. In order for a $VSI$ spacetime to be of Ricci type III, the dilaton, $\phi$, must be a function of $u$ and the metric will have $v$-dependence. However, no null or timelike Killing vectors can exist in a $VSI$ spacetime if the metric is dependent on $v$ \cite{VSISUG}, thus the Ricci type III spacetimes do not preserve supersymmetry. While the Ricci type N, Weyl type III solutions can be reduced to Weyl type N, the Ricci type III solution can only have Weyl type III.

In four dimensions $VSI$ spacetimes are known to be exact string solutions to all orders in the string tension $\alpha '$, even in the presence of additional fields \cite{coley}. Similarly it can be shown that higher-dimensional supergravity solutions supported by the proper fields (for example, the dilaton scalar field, Kalb-Ramond field, and form fields) are also exact solutions in string theory using arguments from \cite{string,matsaev,RT}. Thus it can be analogously argued that the $VSI$ supergravity spacetimes are exact string solutions to all orders in the string tension $\alpha '$ in the presence of the appropriate fields, and so it is to be expected that the special $VSI$ supergravity solution introduced in \cite{VSISUG} will be as well. From this we conclude that the Ricci type III solution may be of relevance to string theory.

Since the Ricci type III IIB-supergravity solutions do not preserve supersymmetry, this leads to the question of what are the necessary conditions to preserve supersymmetry? In a number of supergravity theories (e.g. D= 11 \cite{hommth}, type IIB \cite{iibkv}), in order to preserve some supersymmetry, it is necessary that the spacetime admits a Killing spinor $\epsilon^A$ which then yields a  null or timelike Killing vector from its Dirac current $x^\al = \bar \epsilon^{A'} \gamma^\al_{AA'} \epsilon^A$ where the $\gamma^{\al}_{~AA'}$ are the higher dimensional analogues of the gamma matrices.

In the case of N-dimensional $VSI$ spacetimes, the existence of Killing vectors depends on whether the components of the metric are independent of the light-cone coordinate $v$. This requirement leads to the conclusion that all $VSI$ spacetime solutions to type IIB supergravity preserving some supersymmetry are of Ricci type N, Weyl type III(a) or N \cite{CFHP}. Such spacetimes include not only pp-waves but also spacetimes of Weyl type III(a), an example of which is the string gyratons \cite{strgyro}. Weyl type III(a) spacetimes like the vacuum solution or NS-NS solutions only preserve some of the supersymmetry.

It is known that $AdS_d \times S^{(D-d)}$ (in short $AdS\times S$) is an exact solution of supergravity (and preserves the maximal number of supersymmetries for certain values of $(D,d)$ and for particular ratios of the radii of curvature of the two space forms. Such spacetimes  (with $d=5,D=10$) are supersymmetric solutions of IIB supergravity (there are analogous solutions in $D=11$ supergravity) \cite{kumar}: $AdS \times S$ is an example of a $CSI$ spacetime \cite{CSI}. There are a number of other $CSI$ spacetimes known to be solutions of supergravity and admit supersymmetries; namely, generalizations of $AdS \times S$ (for example, see \cite{Gauntlett}) and (generalizations of) the chiral null models \cite{hortseyt}. The $AdS$ gyraton (which is a $CSI$ spacetime with the same curvature invariants as pure AdS) \cite{FZ} is a solution of gauged supergravity  \cite{Caldarelli} (the $AdS$ gyraton can be cast in the Kundt form \cite{CFH}).

Other known $CSI$ spacetimes have been investigated as solutions of supergravity. For example, we can consider the product manifolds of the form $M \times K$, where $M$ is an Einstein space with negative constant curvature and $K$ is a (compact) Einstein-Sasaki spacetime. The warped product of $AdS_3$ with an 8-dimensional compact (Einstein-Kahler) space $M_8$ with non-vanishing 4-form flux are supersymmetric solutions of D=11 supergravity \cite{Gauntlett}, while in \cite{membranes} supersymmetric solutions of $D=11$ supergravity, where $M$ is the squashed $S^7$, were given.

A class of $CSI$ spacetimes which are solutions of supergravity and preserve supersymmetries, were built from a  $VSI$  seed and locally homogeneous (Einstein) spaces by warped products, fibered products and tensor sums \cite{CSI}, yielding generalizations of $AdS\times S$ or $AdS$ gyratons  \cite{CFH}.  In particular, solutions obtained by restricting attention to  $CCNV$  and Ricci type N spacetimes were considered, some explicit examples of $CSI$ supergravity spacetimes were constructed by taking a homogeneous (Einstein) spacetime, $(\mathcal{M}_{\text{Hom}},\tilde{g})$ of Kundt form and generalizing to inhomogeneous spacetimes, $(\mathcal{M},{g})$ by including arbitrary Kundt metric functions (by construction, the curvature invariants of $(\mathcal{M},{g})$ will be identical to those of $(\mathcal{M}_{\text{Hom}},\tilde{g})$); a number of 5D examples were given, in which $ds_{hom}^2$ was taken to be  Euclidean space or hyperbolic space \cite{cardoso}.

\end{subsection}

\end{section}

\begin{section}{Curvature}

\begin{subsection}{Riemann tensor components}

As a note the frame indices $a,b,c,d$ range from 1 to $N$, $i,j,k,l,h$ lie between 3 and $N$ and $m,n,p,q \in [4,N]$. Equivalently the coordinate indices $\alpha ,\beta ,\gamma, \delta \in [1,N]$, $e,f,g \in [3,N]$ and $r,s,t,v \in [4,N]$.

The Riemann tensor components of the metric \eqref{CCNVKundt} were found in \cite{mcnutt} using the Cartan Structure equations. To summarize we first choose the following null coframe

\beq \omega^1 &=& \ell = du. \nonumber \\
\omega^2 &=& n = dv + H du +  \hat W_e dx^e. \label{nullframe} \\
\omega^i &=& m^i = m^i_{~e} dx^e,  ~~~~m^i_{e}m_{if} = \tilde g_{ef}.  \nonumber \eeq

\noindent We then defined the matrix-valued connection-one form, $\Gamma^{ a  }_{~b} = \Gamma^{ a  }_{~b c } \omega^{c }$, satisfying the Cartan Structure equations

\beq & d\omega^{ a  } = - \Gamma^{ a  }_{~b }\wedge \omega^{ b } = \Gamma^{ a  }_{~b c }\omega^{b } \wedge \omega^{c},& \label{cartane1}\\
&\Theta^{a}_{~b} = d\Gamma^{a}_{~b}+\Gamma^{a}_{~c} \wedge \Gamma^{c}_{~b} = \frac{1}{2}R^{a}_{~bcd} w^{c} \wedge w^{d}.& \label{cartane2} \eeq

\noindent Exterior differentiation of the above coframe gives

\beq d \bfn &=& (H_{,e} -  \hat W_{e,u})m_{j}^{~e} m^{j} \wedge \ell + 2 \hat W_{[e,f]}m_{i}^{~e}m_{j}^{~f} m^{j} \wedge m^{i} \nonumber \\
d\ell &=& 0 \\
dm^{i} &=& m^{i}_{~e,u}m_{j}^{~e} \ell \wedge m^{j} + m^i_{~e,f}m_{[j}^{~e}m_{k]}^{~f}m^{k} \wedge m^{j}. \nonumber  \eeq

\noindent We compare this with the first Cartan Structure equation and solve the resulting system of linear equations involving the connection components to find that:

\beq  \Gamma_{2i2} &=& J_i =  (H_{,e}- \hat W_{e,u})m_{i}^{~e}, \\
 \Gamma_{2ij} &=& - \frac 12 A_{ij} - B_{(ij)}, \label{G2ij}\\
 \Gamma_{ij2} &=& \frac 12 A_{ij} - B_{[i j]}, \label{otherc}\\
 \Gamma_{ijk} &=& - \frac{1}{2}[D_{ijk}+D_{jki}+D_{kji}] \label{gammaijk} \eeq

\noindent where, the above tensors, $A,B,D$ are defined to be:

\beq  D_{ijk} &=& 2m_{i e,f}m_{[j}^{~e}m_{k]}^{~f},  \label{Dijk} \\
 A_{ij} &=& 2 \hat W_{[e,f]}m_{i}^{~e}m_{j}^{~f}, \label{Aij}  \\
 B_{ij} &=& m_{i e,u}m_{j}^{~e} \label{Bij}.  \eeq

\noindent From the second Cartan Structure equation, we have that the non-zero components of the Riemann tensor are $R_{2ij2}$ $R_{2ijk}$ and $R_{ijkl}$, displayed below in terms of $A$, $B$, $D$ and the connection components of the transverse metric:

\beq
R_{2 i j 2}&=& J_{i,e}m_{j}^{~e} + [  \frac 12 A_{ij} + B_{(ij)} ]_{,u} + [ \frac 12 A_{ik} + B_{(ik)} ]  B^{k}_{~j} \\
 & & - J_{k} \Gamma ^{k}_{~ i j} + \displaystyle \sum_k [ \frac{1}{2}A_{kj} + B_{(kj)}][ \frac{1}{2}A_{ki} - B_{[ki]}]  \nonumber\\
\nonumber \\
R_{2 i j k}&=& -[ \frac{1}{2} A_{i k} + B_{(i k)} ]_{,e}m_{j}^{~e} + [ \frac 12 A_{i j} + B_{(i j)} ]_{,e} m_{k}^{~e} \\
& & + [ A_{i l} + 2B_{(i l)} ] m^l_{~[e,f]}m_j^{~e}m_k^{~f} -  [ \frac{1}{2} A_{l j} + B_{(l j)} ] \Gamma^l_{~ i k} \nonumber\\
& & + [ \frac 12 A_{l k} + B_{(l k)} ] \Gamma^l_{~ i j} \nonumber \\
\label{riembw-1} \\
R_{i j k l}&=& [\Gamma_{i j h} m^{h}_{~e}]_{,f}(m_{k}^{~f}m_{l}^{~e} - m_{l}^{~f}m_{k}^{~e})- \Gamma_{i h l} \Gamma^{ h}_{~ j k} + \Gamma_{i h k} \Gamma^{ h}_{~ j l}. \label{riembw0} \eeq



\end{subsection}

\begin{subsection}{Ricci Tensor Components}

Contracting the first and third indices of the Riemann tensor, we obtain the non-zero components of the Ricci tensor

\beq
R_{22} &=& \displaystyle \sum_{i} [ - J_{i,e}m_{i}^{~e} - (  \{ \frac 12 A_{ii} \} + B_{(ii)} )_{,u} - ( \frac 12 A_{ik} + B_{(ik)} )  B^{ki} \\
 & & + J_{k} \Gamma ^{k}_{~ i i} - ( \frac{1}{2}A_{ki} + B_{(ki)})( \frac{1}{2}A_{ki} - B_{[ki]})]  \nonumber\\
\nonumber \\
R_{2i} &=& \displaystyle \sum_j [( \frac{1}{2} A_{j i} + B_{(j i)} )_{,e}m_{j}^{~e} - (B_{(j j)} )_{,e} m_{i}^{~e} \label{ricbw-1}\\
& & - ( A_{j l} + 2B_{(j l)} ) m^l_{~[e,f]}m_j^{~e}m_i^{~f} +  ( \frac{1}{2} A_{l j} + B_{(l j)} ) \Gamma^l_{~ j i} \nonumber\\
& & -  (\frac 12 A_{l i} + B_{(l i)} ) \Gamma^l_{~ j j}] \nonumber \\
\nonumber \\
R_{ij} &=& \displaystyle \sum_k [ (\Gamma_{k i l} m^{l}_{~e})_{,f}(m_{k}^{~f}m_{j}^{~e} - m_{j}^{~f}m_{k}^{~e})- \Gamma_{k l j} \Gamma_{l i k} + \Gamma_{k l k} \Gamma_{l i j}]. \label{ricbw0}
\eeq



\noindent By \eqref{G2ij} this yields $\displaystyle \sum_i \Gamma_{2ii}=-B_{ii}=-g^{ij}\partial_{u}g_{ij}$, but from the definition of $B_{ij}$ we have the identity $B_{ij}+B_{ji}= -m_{ie}m_{jf}\partial_{u}g^{ef}.$ Consequently we have that

\beq & 2R_{2i}=-(g^{jk}\partial_{u}g_{jk})_{|i}-A_{il|l}+(B_{il}+B_{li})_{|l} \label{ric2l}. & \eeq

\noindent We note that by contracting $B_{ij}+B_{ji}= -m_{ie}m_{jf}\partial_{u}g^{ef}.$ with $m^{i}_{~e}m^{jf}$ gives $B_{i}^{~j}+B^{j}_{~ i} = -g^{ik}\partial_{u}g_{ki}$ in terms of coordinate indices.  This can be used to express the last term of \eqref{ric2l} in terms of a divergence over the coordinate index $j$.
	
\end{subsection}

\end{section}

\begin{section}{Criteria for $CCNV$ spacetimes to have constant curvature invariants} \label{CSICriteria}

It has been shown that the line-element  \eqref{CCNVKundt} for a spacetime admitting a covariantly constant null vector has a Riemann curvature tensor with the following boost weight decomposition:

\beq  R_{\alpha\beta\gamma\delta} &=& \overbrace{R_{ijkl}m^{i}_{\ \{\alpha}m^{j}_{\ \beta}m^{k}_{\ \gamma}m^{l}_{\ \delta\}}}^0  + \overbrace{R_{2jkl}\ell_{\{\alpha}m^{j}_{\ \beta}m^{k}_{\ \gamma}m^{l}_{\ \delta\}}}^{-1} + \\
& & \overbrace{R_{2j2l}\ell_{\{\alpha}m^{j}_{\ \beta}\ell_{\gamma}m^{l}_{\ \delta\}} }^{-2}.  \label{riembwd} \nonumber
\eeq

\noindent We note from \eqref{riembw0} that only the curvature tensor of the transverse metric contributes to the boost-weight 0 components of the curvature tensor.

In order to take the covariant derivative of \eqref{riembwd}, we first consider the covariant derivative of the frame components $R_{abcd}$:

\beq
\nabla_{\epsilon}R_{abcd}=\ell_{~\epsilon}D_{2}R_{abcd}+m^{i}_{\ \epsilon}D_{i}R_{abcd}  \label{nabfriem}
\eeq

\noindent where $D_a$ denotes the derivative of a function with respect to the frame vectors $\{ m_i \}$ and we have used the $v$-independence of \eqref{CCNVKundt} to set $D_{1}R_{abcd}=0$. The covariant derivatives of the frame one-form, $\ell$ is $\nabla_{\epsilon}\ell_{\alpha}=0$, while for the remaining frame one-forms $m^i$,

\beq \nabla_{\epsilon}m_{i\alpha}=\Gamma_{lin}m^{l}_{~\alpha}m^{n}_{\ \epsilon} +\Gamma_{li2}m^{l}_{~\alpha}\ell_{\epsilon}+\Gamma_{2il}\ell_{\alpha}m^{l}_{~\epsilon}+\Gamma_{2i2}\ell_{\alpha}\ell_{\epsilon} \, . \label{mideriv} \eeq

Thus \eqref{nabfriem} and \eqref{mideriv} imply that $\nabla$ does not raise boost weight because covariant differentiation does not introduce the null vector $\bfn$ into the expressions; as a result $\nabla_{\epsilon}R_{\alpha\beta\gamma\delta}$ will contain frame components whose highest boost weight is 0 and these will correspond only to the covariant derivative of the curvature of the Riemannian transverse space.  Using an inductive argument, this can be shown to hold for any number of covariant derivatives of the Riemann tensor. Let the $k^{th}$ covariant derivative of \eqref{riembwd} be represented symbolically as $\nabla^{k}R$, then we can say that $\nabla^{k}R$ has frame components whose highest boost weight is zero and $\ell$ contracted with any index of $\nabla^{k}R$ vanishes; i.e., $\nabla^{k}R \cdot \ell = 0$.

It now follows that all curvature invariants of \eqref{CCNVKundt} will be completely equivalent to the curvature invariants of the transverse space.  Therefore, if we impose the $CSI$ condition on \eqref{CCNVKundt}, we are requiring the transverse space to be a $CSI$ Riemannian metric.  From a theorem of \cite{prufer} we conclude that the transverse metric is locally homogeneous, which establishes the following,

\begin{lem}\label{lem:ccnvcsi}
A generalized Kundt metric admitting a  $CCNV$  is $CSI$ if and only if the transverse metric is locally homogeneous.
\end{lem}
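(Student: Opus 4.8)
The plan is to reduce the lemma entirely to a statement about the Riemannian transverse metric $g_{ef}$, exploiting the identification of invariants already established in the discussion surrounding \eqref{riembwd}. That discussion shows that every $k$-th covariant derivative $\nabla^k R$ of the full spacetime has highest boost weight zero, with its boost-weight-zero part equal to the (transverse covariant derivative of the) curvature $R_{ijkl}$ of $g_{ef}$, and moreover that $\ell$ contracted into any slot of $\nabla^k R$ vanishes, i.e. $\nabla^{k}R \cdot \ell = 0$. I would first make precise the consequence asserted there, namely that each scalar polynomial curvature invariant of \eqref{CCNVKundt} equals the corresponding scalar invariant of the transverse metric. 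Once this is in hand, \eqref{CCNVKundt} is $CSI$ if and only if $g_{ef}$ is a $CSI$ Riemannian metric, and the lemma follows by relating Riemannian $CSI$ to local homogeneity.

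For the forward implication, assume \eqref{CCNVKundt} is $CSI$. By the identification of invariants, every scalar curvature invariant of the transverse metric is then constant, so $g_{ef}$ is a Riemannian manifold with constant scalar invariants. Applying the theorem of \cite{prufer} (the Riemannian statement $CSI \equiv H$ recalled above) yields that $g_{ef}$ is locally homogeneous. For the converse, suppose $g_{ef}$ is locally homogeneous. Scalar curvature invariants are preserved by local isometries and are genuine functions on the transverse manifold; transitivity of the local isometry pseudogroup therefore forces each such invariant to be constant. Hence the transverse invariants are constant, and by the same identification the invariants of \eqref{CCNVKundt} are constant, so the spacetime is $CSI$. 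This direction does not require \cite{prufer}.

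The only substantive step is the identification of invariants, which I would justify by the following contraction argument. A scalar invariant is a complete contraction of a product of factors $\nabla^k R$ using the inverse frame metric, whose only nonzero entries are the transverse block $\delta^{ij}$ and the off-diagonal pairing $g^{12} = g^{21} = 1$ between the $\ell$- and $n$-directions. Because $\nabla^{k}R \cdot \ell = 0$, no factor carries an $n$-frame index; consequently an $\ell$-frame index can never be saturated, since contracting it would require pairing it against an $n$-index through $g^{12}$. Every term of the invariant containing a negative-boost-weight (that is, $\ell$-bearing) factor therefore drops out, and only the all-transverse, boost-weight-zero pieces survive. These are precisely the components $R_{ijkl}$ and their transverse covariant derivatives appearing in \eqref{riembw0}, contracted with $\delta^{ij}$, which reproduces exactly the scalar invariant of $g_{ef}$.

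The main obstacle is thus bookkeeping rather than conceptual: one must verify that no admissible contraction pattern can circumvent the absence of $n$-indices, so that there is genuinely no way to saturate an $\ell$-index other than against an $n$-index. A secondary delicacy is the precise hypothesis of the cited theorem of \cite{prufer}, since in the Riemannian setting local homogeneity follows from constancy of only finitely many curvature invariants; I would simply invoke the $CSI \equiv H$ result as quoted, treating the sufficiency of constant invariants as established there.
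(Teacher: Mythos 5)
Your proposal is correct and follows essentially the same route as the paper: the reduction of all spacetime invariants to those of the transverse metric via the boost-weight argument and $\nabla^{k}R\cdot\ell=0$, followed by the theorem of \cite{prufer} identifying Riemannian $CSI$ with local homogeneity. Your explicit contraction bookkeeping and the separate treatment of the converse direction merely spell out steps the paper leaves implicit.
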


Now, consider the Ricci invariant $r_{2}=R_{ab}R^{ab}=R_{ij}R^{ij}$, where the second equality follows from the form of the Ricci tensor \eqref{ricbw-1}-\eqref{ricbw0}, which shows that the boost weight 0 components arise solely from the transverse metric.  Since $r_{2}=\sum_{i,j}(R_{ij})^2$ is a sum of squares, we have that if $r_{2}=0$ then $R_{ij}=0$.  A theorem of \cite{aleksee} states that a homogeneous Riemannian space that is Ricci-flat is flat.  Therefore, combining these with Lemma \eqref{lem:ccnvcsi} gives the result:

\begin{prop}\label{prop:ccnvvsi}
If a generalized Kundt metric admitting a  $CCNV$  is $CSI$ and $R_{ab}R^{ab}=0$ then the metric is $VSI$.
\end{prop}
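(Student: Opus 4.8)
The plan is to chain together the three results that have already been assembled in this section: Lemma \ref{lem:ccnvcsi}, the theorem of \cite{aleksee} on Ricci-flat homogeneous Riemannian spaces, and the boost-weight structure of the curvature established around \eqref{riembwd}. The logical skeleton is short: the $CSI$ hypothesis plus Lemma \ref{lem:ccnvcsi} forces the transverse space to be locally homogeneous; the invariant hypothesis $R_{ab}R^{ab}=0$ will be shown to force the transverse metric to be Ricci-flat; and then \cite{aleksee} upgrades ``homogeneous and Ricci-flat'' to ``flat.'' Once the transverse metric is flat, I would invoke the $VSI$ Theorem together with the canonical form \eqref{VSIKundt} to conclude the metric is $VSI$.

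First I would unpack the invariant $r_2 = R_{ab}R^{ab}$ using the Ricci components \eqref{ricbw-1}--\eqref{ricbw0}. The key observation, already noted in the excerpt, is that the only boost-weight-zero pieces of the Ricci tensor are the $R_{ij}$, which coincide with the Ricci tensor of the transverse Riemannian metric, so $r_2 = R_{ij}R^{ij} = \sum_{i,j}(R_{ij})^2$. The point here is that the frame is chosen so that $R_{ij}$ are precisely the components of a \emph{Riemannian} (positive-definite) tensor, hence the contraction is a genuine sum of squares with no indefinite-signature cancellation. Therefore $r_2 = 0$ forces every $R_{ij}=0$, i.e. the transverse metric is Ricci-flat. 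I would be careful to justify that the lower-boost-weight components ($R_{22}$, $R_{2i}$) cannot contribute to $r_2$, since $\ell$ is null and contractions involving it drop out of this particular scalar — this is the one place where the Lorentzian signature enters and deserves an explicit sentence.

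Next, combining Lemma \ref{lem:ccnvcsi} (local homogeneity of the transverse space, valid since we assume $CSI$) with the Ricci-flatness just obtained, the theorem of \cite{aleksee} gives that the transverse metric is flat. Then all boost-weight-zero curvature components vanish: by the remark following \eqref{riembw0} the transverse curvature is the sole source of boost-weight-zero terms in $R_{\alpha\beta\gamma\delta}$, so the full spacetime Riemann tensor now has only negative boost-weight components along $\ell$. Since $\ell$ is a $CCNV$, it is aligned, geodesic, non-expanding, shear-free and non-twisting, so the $VSI$ Theorem applies and the spacetime is $VSI$. Equivalently, flatness of the transverse metric lets us coordinatize it as $\delta_{ef}\,dx^e dx^f$, bringing \eqref{CCNVKundt} into the canonical $VSI$ form \eqref{VSIKundt}.

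I expect the main obstacle to be the second step rather than the invocations of the cited theorems. The delicate point is showing rigorously that $r_2$ isolates exactly the transverse Ricci components and nothing else — that is, that the Lorentzian contraction $R_{ab}R^{ab}$ reduces to the positive-definite sum $\sum_{i,j}(R_{ij})^2$ with no contribution from $R_{22}$ or $R_{2i}$. This hinges on the fact that raising an index of a boost-weight $-1$ or $-2$ Ricci component using the null coframe (where $\ell$ and $n$ pair off) sends it to a component that must be contracted against a vanishing or boost-raising partner, so these terms genuinely drop. Once that bookkeeping is handled, the implication ``sum of squares $=0 \Rightarrow$ each term $=0$'' is immediate, and the remainder of the argument is a direct chaining of the quoted results.
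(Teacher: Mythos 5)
Your proposal is correct and follows essentially the same route as the paper: identify $r_2=R_{ab}R^{ab}$ with the positive-definite sum $\sum_{i,j}(R_{ij})^2$ over the transverse Ricci components, conclude Ricci-flatness, combine with Lemma \ref{lem:ccnvcsi} and the theorem of \cite{aleksee} to get a flat transverse space, and hence $VSI$. The extra care you take in justifying why the boost-weight $-1$ and $-2$ Ricci components drop out of the contraction, and the explicit final appeal to the $VSI$ Theorem, are details the paper leaves implicit but are entirely consistent with its argument.
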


Although $R_{\alpha\beta\gamma\delta}\ell^{\alpha}=0=R_{\alpha\beta}\ell^{\alpha}$, it does not follow that $C_{\alpha\beta\gamma\delta}\ell^{\alpha}=0$.  More precisely, if we consider the decomposition of the Riemann tensor into its trace and trace-free parts we obtain

\beq
&C_{1bcd}=\frac{1}{(N-2)}(\eta_{1d}R_{bc}-\eta_{1c}R_{bd})+\frac{1}{(N-1)(N-2)}(\eta_{1c}\eta_{bd}-\eta_{1d}\eta_{bc})R \, .& \label{1weyl}
\eeq

\noindent It is clear that there will exist boost weight 0 and -1 components of Weyl with projections along $\ell$ that do not vanish, namely

\begin{eqnarray}\label{1weylcmpts}
&C_{1212} = \displaystyle{\frac{-R}{(N-1)(N-2)}}, ~~ C_{1i2j} = \displaystyle{\frac{-R_{ij}}{N-2}} + \displaystyle{\frac{R\delta_{ij}}{6}}, ~~ C_{12i2} = \displaystyle{\frac{R_{2i}}{N-2}}.& \,
\end{eqnarray}

\noindent Assuming the conditions in proposition \eqref{prop:ccnvvsi} are satisfied then the Weyl components in \eqref{1weylcmpts} vanish and furthermore $C_{ijkl}=0$ as well \cite{aleksee}. The remaining non-vanishing Weyl components of the $VSI$ metric are $C_{2jkl}$ and $C_{2j2l}$.

\end{section}

\begin{section}{$CCNV$ spacetimes admitting an additional isometry}  \label{GeneralCCNV}

\noindent The Killing equations for $X = X_1 n + X_2 \ell + X_i m^i$ are:

\beq &D_{1}X_{1} = 0 \label{killeqn0}\\
&D_{2} X_{1} + D_{1} X_{2} = 0& \label{killeqn1}\\
&D_{3}X_{1} + D_{1} X_{3} = 0& \label{killeqn2}\\
&D_{m} X_{1} = 0& \label{killeqn3}\\
&D_2 X_2 + \displaystyle \sum_i  J_i X_i=0 & \label{killeqn4}\\
& D_i X_2 + D_2 X_i - J_i X_1- \displaystyle \sum_j (A_{ji}+B_{ij})X_j = 0 & \label{killeqn5} \\
& D_j X_i + D_i X_j +2B_{(ij)}X_1 - 2\displaystyle \sum_k \Gamma_{k(ij)} X_k = 0& \label{killeqn6}. \eeq

To start, we make a coordinate transformation to eliminate $ \hat W_3$ in \eqref{CCNVKundt}

\beq (u',v',x'^i) = (u, v+h(u,x^k),x^i), ~~~ h = \int  \hat W_3 dx^3. \label{now3} \eeq

\noindent This choice of coordinates will be useful in order to write down the metric functions and Killing covector components themselves or determining equations for them. For any point in the manifold we may rotate the frame, setting $X_3 \neq 0$ and $X_m = 0$.
This can be done by taking the spatial part of the Killing form $X = X_1n + X_2 \ell + X_i  m^i$ and choosing

\beq m^3 = \frac{1}{\chi}X_im^i~~~~\chi = \sqrt{\displaystyle \sum_i X_i^{~2}}. \eeq

\noindent Using Gram-Schmidt orthonormalization it is possible to determine the remaining vectors for the frame basis. This is a local orthogonal rotation so the form of our metric remains unchanged while $X$ is now

\beq X = X_1 n + X_2 \ell + \chi m^3 . \eeq

\noindent Henceforth it will be assumed that the matrix $m_{ie}$ is upper-triangular, due to the QR decomposition.

The frame derivatives are

\beq \ell ~=~ D_1 &=&  \partial_v \nonumber \\
n ~=~ D_2  &=&  \partial_u - H\partial_v \label{framed}\\
m_i ~=~ D_i &=&  m_i^{~e}(\partial_{e} -  \hat W_e \partial_v) \nonumber \eeq

\noindent Thus \eqref{killeqn0} -- \eqref{killeqn3} imply that the Killing vector components are of the form:

\beq  & X_1 = F_1(u,x^e) &  \nonumber \\
&X_2 = -D_2(X_1)v+F_2(u,x^e)& \label{kvcomps} \\
&X_3 = -D_3(X_1)v +F_3(u,x^e). & \nonumber \eeq

\noindent The remaining Killing equations (\eqref{killeqn4}- \eqref{killeqn6} involve $A_{ij}$ and $J_i$, if we define $W_i = m_i^{~e} \hat W_e$, we may write  $\Gamma_{2i2} = J_i$ and $A_{ij}$ in terms of frame derivatives

\beq  J_i &=& D_i H - D_2 W_i - B_{ji}W^j \label{Jieqn} \\
 A_{ij} &=& D_{[j} W_{i]}  + D_{k[ij]}W^k. \label{Aijeqn} \eeq

From the commutation relations
 
\beq \mbox{} [D_{1},D_{a}] & = & 0 \nonumber \\
\mbox{} [D_{2},D_{j}] & = & J_{j}D_{1}-\sum_{i}B_{ij}D_{i} \label{com} \\
\mbox{} [D_{k},D_{j}] & = & A_{kj}D_{1}+2\sum_{i}\Gamma_{i[kj]}D_{i} \, . \nonumber \eeq

\noindent applied to the Killing equations the following can be derived:  

\beq D_aD_3X_1 = \Gamma_{3na}D_3X_1 = 0, a = 2,3,...N \label{ComKv} \eeq


\noindent This leads to two cases either $D_3 X_1 = 0$ or $\Gamma_{3n2} = \Gamma_{3n3} = \Gamma_{3nm} = 0$ and $X_1$ is  linear in $x^3$. Supposing there exists another Killing vector $X$ we will find further constraints on its components $X_a$ as well as the metric functions $W_e$ and $m_{ie}$ in the ensuing subcases.

\begin{subsection}{Case 1: $D_{3}X_{1}=0$}

Equation \eqref{killeqn3} implies $X_1$ must be independent of all spacelike coordinates. Using equation \eqref{killeqn4} and the definition of $F_2$ from \eqref{kvcomps}, we have that $X_1$ must be of the form

\beq X_{1}=c_1u+c_2. \eeq

\noindent If $c_1 \neq 0$ we may always use a type (3) coordinate transform from Section \ref{KundtCCNVsect} to set $X_1 = u$, while if $c_1 = 0$ we may choose $c_2 = 1$ by scaling all coordinates by $c_2$ in both cases the functions $F_2$, $F_3$, $H$ and $W_e$ in the new coordinate system are just the original functions multiplied by constants.

Equations \eqref{ComKv} are identically satisfied, and \eqref{killeqn4}-(\ref{killeqn6}) reduce to:

\begin{eqnarray}
c_1 H + D_2 F_{2} + J_3 F_3 & = & 0 \label{1bcase1} \\
- J_3 X_1 + D_3 F_2 + D_2 F_3 - B_{33} F_3 & = & 0 \label{2b1case1} \\
c_1 W_n - J_n X_1 + D_n F_2 - ( A_{3n} + B_{n3} ) F_3 & = & 0 \label{2b2case1} \\
B_{33} X_1 + D_3 F_3 & = & 0 \label{3b1case1} \\
2 B_{(3n)} X_1 + D_n F_3 - \Gamma_{3n3} F_3 & = & 0 \label{3b2case1} \\
2 B_{(nm)} X_1 - 2 \Gamma_{3(nm)} F_3 & = & 0  \, . \label{3b3case1}
\end{eqnarray}

\noindent Setting $c_{1}=c_{2}=F_{3}=0$, $X$ reduces to a scalar multiple of the known Killing covector $\ell$. We must consider the possibility where $F_3$ vanishes.

\end{subsection}

\begin{subsection}{Subcase 1.1: $F_{3}=0$}



\noindent Setting $F_3 = 0$ in equations \eqref{3b1case1}-\eqref{3b3case1} imply that $B_{(ij)}=0$. Rewriting  this as
$B_{(ij)} = m_i^{~e}m_j^{~f}g_{ef,u}$, the metric is independent of $u$. By virtue of the upper-triangular form of $m_{ie}$ we see it must be independent of $u$ also. 

Assuming $c_1 \neq 0$, we make the appropriate coordinate transformation to set $X_1 = u$, equation \eqref{1bcase1} yields $H$ algebraically:

\beq H = - D_2 F_2. \nonumber  \eeq

\noindent Solving the resulting differential equation from \eqref{2b2case1}, $W_m$ is expressed as:

\begin{eqnarray}
W_m  & = & \frac{1}{u}\left[\int -D_{m}(u D_2 F_2 + F_2) du + B_{m}(x^e)\right]. \nonumber
\end{eqnarray}

\noindent Taking \eqref{2b1case1} with $J_3 = D_3 H$ we see that

\beq D_2 D_3 ( u F_2) = 0,  \nonumber \eeq

\noindent implying that $F_2$ must be of the form

\beq F_2 = \frac{f_2(x^e)}{u} + \frac{g_2(u)}{u}. \label{case11f2}. \eeq

\noindent We rewrite the equations of $H$ and $W_m$ in terms of these two functions

\beq
H &=&  \frac{ f_2(x^e) }{ u^2 } - \frac{ g_2'(u) }{ u } +  \frac{ g_2(u) }{ u^2 } \label{C11H} \\
W_m &=& \frac{ B_{m}(x^e) }{ u } \label{C11Wn} \eeq

\noindent where $g' $ denotes the derivative of $g$ with respect to $u$

If $c_1 = 0$, $F_2$ must be independent of $u$, we rescale our coordinates so that $X_1 = 1$, the equations for $H$ and $W_n$ are

\beq H &=&  F_2(x^e) + A_0(u, x^r) \label{C11H0} \\
W_n &=&  \int D_nA_0 du + C_n(x^e). \label{C11Wn0} \eeq


\noindent In either case, the only requirement on the transverse metric is that it be independent of $u$. The arbitrary functions in this case are $F_{2}$ and the functions arising from integration.

\end{subsection}

\begin{subsection}{Subcase 1.2: $F_{3} \neq 0$}

As a consequence of the upper triangular form of $m_{ie}$ the system of equations \eqref{1bcase1} -- \eqref{3b3case1} decouples in the following order.  Beginning with equation \eqref{3b1case1}, we may reduce  this to an  equation for $m_{33}$ in terms of $F_{3}$

\begin{equation}
\frac{m_{33,u}}{m_{33}}= -\frac{1}{X_{1}}D_{3}F_{3},  \label{m33,uode}
\end{equation}

\noindent whose solution is

\beq m_{33} = -\int \frac{1}{X_1}F_{3,3} du + A_1(x^3, x^r). \label{m33,u} \eeq

\noindent Next, consider the diagonal components of \eqref{3b3case1} followed by the off-diagonal components to find the determining equation for $m_{nr}$

\begin{equation}
m_{nr,u}= -m_{nr,3} \frac{F_3}{m_{33} X_{1}},   \label{mnr,u}
\end{equation}

\noindent while equation (\ref{3b2case1}) reduces to

\begin{equation} m_{3r,u}=  - \frac{ F_{3,r}}{X_1} - \frac{ m_{3[r,3]}m_3^{~3}F_3}{X_1}. \label{m3r,u} \end{equation}

With the transverse metric now determined and assuming $c_1 \neq 0$,  we again choose coordinates so that $u' = c_1 u + c_2$, equations \eqref{1bcase1} and \eqref{2b1case1} lead to the form of $H$

\beq H = - D_2F_2 - \frac{ D_2(F_3^2) }{ 2u } - \frac{ F_3 D_3F_2  }{ u } - \frac{ F_3 D_3(F_3^2)}{ 2u^2 }. \label{C12H} \eeq

\noindent The form of $A_{ij}$ expressed in frame derivatives \eqref{Aijeqn}, along with equations \eqref{3b2case1} and \eqref{3b3case1} simplify \eqref{2b2case1} to become the determining equation for the $W_n$

\begin{equation}
D_{2}( u W_n ) + F_3 D_3 W_n + D_n(F_2 - u H ) = 0 \, . \label{C12Wn}
\end{equation}

\noindent Given $F_2(u,x^e)$ and $F_3(u, x^e)$, we treat the equations \eqref{C12H} and \eqref{C12Wn} as constraining equations for $H$ and the $W_n$.

If $c_1$ vanishes, rescale to make $c_2 = 1$,  from \eqref{1bcase1} and \eqref{2b1case1} $F_2$ satisfies the equation 

\beq D_2F_2 + F_3 D_3F_2 + \frac12 D_2(F_3^2) + \frac12 F_3 D_3(F_3^2) = 0. \label{C12f20} \eeq

\noindent The metric function $H$ may be written as

\beq H &=& \int m_{33} D_2F_3 dx^3 + F_2 + \frac12 F_3^2 + A_2(u, x^r). \label{C12H0} \eeq


\noindent The only equation for $W_n$ is

\beq & F_3 D_3W_n + D_2 W_n  = D_n(H) . \label{C12Wn0} & \eeq

If $c_2 \neq 0$ the equation for $m_{ie}$ holds, however if $X_1= 0$ they simplify. The equations \eqref{m33,uode} and \eqref{mnr,u} become

\beq F_{3,3} =0,  \label{f3,u} \\
m_{nr,3} = 0.   \label{mnr,u0} \eeq

\noindent A constraint on the function $m_{33}$ arises from equation \eqref{2b1case1}

\beq D_2log(m_{33}) = - \frac{ D_3F_2 }{ F_3 } - D_2log(F_3). \label{m33,u0} \eeq	

\noindent From \eqref{2b2case1} $W_n$ is found

\beq W_n = - \int \frac{ m_{33} D_nF_2}{F_3} dx^3 + E_{n}(u, x^r), \label{C12Wn00} \eeq


\noindent while equation \eqref{1bcase1} gives $H$

\beq H = - \int \frac{ m_{33} D_2F_2}{F_3} dx^3 + A_3(u, x^r). \label{C12H00} \eeq


There are two further subcases to consider, expanding and simplifying equation \eqref{3b2case1}

\beq \frac{m_{3r,3}}{F_3} =  \left( \frac{m_{33}}{F_3} \right)_{,r}  \label{C12splitter} \eeq

\noindent and so, either $m_{3r}$ is a function of $x^3$ or not. If $m_{33,r} \neq 0$ we may integrate \eqref{C12splitter} for $m_{3r}$

\beq m_{3r} = \int \left( \frac{m_{33}}{F_3} \right)_{,r} F_3 dx^3 + G_{r}(u, x^s). \label{m3r,3} \eeq

\noindent Thus the above along with \eqref{f3,u} and the requirement that $m_{mr,3} = 0$ are the only conditions on the matrix $m_{ie}$. If $m_{3r}$ is independent of $x^3$ then we must have $D_n \left( \frac{ m_{33} }{ F_3 } \right)= 0. $ implying

\beq m_{33,r} = 0. \label{ola} \eeq

\noindent Substituting $m_{33}(u, x^3)$ into \eqref{m33,u0} yields a differential equation, whose solution is

\beq m_{33} = \frac{ 1 }{ F_3 } \int -F_{2,3} du + A_4(x^3). \label{m33c121} \eeq


\end{subsection}

\begin{subsection}{Case 2: $\Gamma_{3ia} = 0$} \label{GeneralCCNVc2}

To investigate what constraints these requirements give, we expand the expressions for the connection coefficients in question:

\beq  \Gamma_{3n2} &=& W_{[3,f]}m_3^{~3}m_n^{~f} - m_{3e,u}m_n^{~e} + m_{n3,u}m_3^{~3}, \nonumber \\
 \Gamma_{3ni} &=& \frac{-1}{2} m_{3[e,f]}m_n^{~e}m_i^{~f} + m_{n[f,3]}m_i^{~f}m_3^{~3} + m_{i[f,3]}m_n^{~f}m_3^{~3} . \nonumber  \eeq

\noindent These constraints lead to the following facts about the metric functions

\begin{lem}\label{lem:metricfn1}

The vanishing of  $\Gamma_{3ia}$ imply the metric functions of \eqref{CCNVKundt} must satisfy the following constraints

\beq \hat W_{[3,r]} = m_{3[3}m^3_{~~r],u} \label{C2Wcon1} \\
m_{3[3,r]} = 0 \label{C2m3rcon1}\\
m_{3[r,s]} = 0 \label{C2m3rcon2} \\	
g_{rs,3} = 0. \label{gcon1} \eeq

\end{lem}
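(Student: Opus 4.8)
{\bf Proof proposal.}

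The plan is to prove Lemma \ref{lem:metricfn1} by directly translating the vanishing of each connection coefficient $\Gamma_{3ia}$ into conditions on the metric functions, exploiting the upper-triangular form of $m_{ie}$ throughout. The two expressions already displayed, namely
\[
\Gamma_{3n2} = W_{[3,f]}m_3^{~3}m_n^{~f} - m_{3e,u}m_n^{~e} + m_{n3,u}m_3^{~3},
\]
\[
\Gamma_{3ni} = -\tfrac{1}{2} m_{3[e,f]}m_n^{~e}m_i^{~f} + m_{n[f,3]}m_i^{~f}m_3^{~3} + m_{i[f,3]}m_n^{~f}m_3^{~3},
\]
are my starting point. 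The strategy is to contract these with the inverse frame coefficients $m^{n}_{~e}$, $m^{i}_{~f}$ (recall $m^i_{~e}m_{if}=\delta_{ef}$ in coordinate form and $m_{ie}m_j^{~e}=\delta_{ij}$), so that each equation becomes a clean coordinate-index relation among partial derivatives of the $m_{ie}$ and $\hat W_e$. Because the index $3$ is singled out by the frame rotation of Section \ref{GeneralCCNV}, and because $m_{ie}$ is upper-triangular so that $m_3^{~r}=0$ for $r>3$ and $m^3_{~r}=0$ for $r\neq 3$, many terms drop and the remaining ones organize themselves into the four stated identities.

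First I would handle $\Gamma_{3n2}=0$. Freeing the index $n$ by multiplying through by $m^n_{~r}$ and using upper-triangularity to isolate $m_3^{~3}$, the term $W_{[3,f]}m_3^{~3}m_n^{~f}$ yields $\hat W_{[3,r]}$ up to the factor $m_3^{~3}$, while the two $u$-derivative terms combine into $m_{3[3}m^3_{~~r],u}$; this gives \eqref{C2Wcon1}. Next, the spatial coefficients $\Gamma_{3ni}=0$ split according to which of $n,i$ equals $3$. Taking $i=3$ (equivalently, the component that survives after contracting with $m_3^{~3}$) isolates the $m_{3[3,r]}$ term and produces \eqref{C2m3rcon1}; taking both $n,i>3$ picks out the purely transverse antisymmetrization $m_{3[r,s]}$ and yields \eqref{C2m3rcon2}. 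Finally, \eqref{gcon1} should follow by feeding \eqref{C2m3rcon1} and \eqref{C2m3rcon2} back into $g_{rs}=m_{ir}m_{i}{}^{~}{}_{s}$: differentiating $g_{rs}$ in the $x^3$ direction and using that the antisymmetric parts $m_{3[3,r]}$ and $m_{3[r,s]}$ vanish should force the symmetric $x^3$-derivative $g_{rs,3}$ to vanish as well, after again invoking upper-triangularity to control which entries contribute.

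I expect the main obstacle to be the bookkeeping in the second step: the single equation $\Gamma_{3ni}=0$ encodes several distinct coordinate relations, and separating the $i=3$ case from the $n,i>3$ case requires care because the upper-triangular structure makes the surviving entries of $m_{i[f,3]}$ asymmetric in a way that is easy to mishandle. In particular, verifying \eqref{gcon1} is the delicate point, since it is not one of the primitive $\Gamma=0$ equations but a derived consequence; I would need to check that the combination of \eqref{C2m3rcon1} and \eqref{C2m3rcon2}, together with the fact that $m_{ie}$ is triangular and the metric is $v$-independent, genuinely eliminates \emph{all} of $g_{rs,3}$ and not merely its antisymmetric part. The rest of the argument is routine index manipulation once the contractions are set up correctly.
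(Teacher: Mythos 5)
Your handling of $\Gamma_{3n2}=0$ and $\Gamma_{3n3}=0$ matches the paper's: contract with the inverse frame components, use $m_n^{~e}m^n_{~s}=\delta^e_{~s}-m_3^{~e}m^3_{~s}$ and upper-triangularity, and read off \eqref{C2Wcon1} and \eqref{C2m3rcon1}. The gap is in your treatment of the remaining equation $\Gamma_{3np}=0$ with $n,p>3$. You describe \eqref{C2m3rcon2} as coming from "the purely transverse antisymmetrization" and then propose to obtain \eqref{gcon1} afterwards by differentiating $g_{rs}=m_{ir}m^i_{~s}$ and invoking \eqref{C2m3rcon1}--\eqref{C2m3rcon2}. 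That last step cannot work: $g_{rs,3}$ contains the terms $m_{nr,3}m^n_{~s}+m_{ns,3}m^n_{~r}$ with $n>3$, and neither \eqref{C2m3rcon1} nor \eqref{C2m3rcon2} says anything about the $x^3$-derivatives of the rows $m_{nr}$ with $n>3$. The vanishing of the antisymmetric parts of the $m_{3e}$ derivatives simply does not control the symmetric quantity $g_{rs,3}$; it is independent information, and it must come from $\Gamma_{3np}=0$ itself, not as a corollary of the other two identities.

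The step you are missing is the symmetry split that the paper performs on the single equation $\Gamma_{3np}=0$. After substituting \eqref{C2m3rcon1}, that equation takes the form
\begin{equation}
m_{3[r,s]}\, m_n^{~r} m_p^{~s} \;=\; -\, m_n^{~r} m_p^{~s}\, g_{rs,3}\, m_3^{~3},
\nonumber
\end{equation}
where one uses $m_{nf,3}m_p^{~f}+m_{pf,3}m_n^{~f}=m_n^{~r}m_p^{~s}g_{rs,3}$. The left side is antisymmetric under $n\leftrightarrow p$ while the right side is symmetric, so symmetrizing and antisymmetrizing over $(n,p)$ forces each side to vanish separately; this single observation delivers \eqref{C2m3rcon2} and \eqref{gcon1} simultaneously. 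Without it, you neither cleanly isolate $m_{3[r,s]}$ (since the $n,p>3$ equation genuinely mixes the two kinds of terms) nor have any route to $g_{rs,3}=0$.
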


\begin{proof}

To begin the proof consider $\Gamma_{3n2}$, using the upper-triangular form for $m_{ie}$ this simplifies to be

\beq \hat W_{[3,r]}m_3^{~3}m_n^{~r} - m_{3e,u}m_n^{~~e} = 0. \nonumber \eeq

\noindent Multiplying through by $m^n_{~~s}$, we note that $m_n^{~~e}m^n_{~~s} = \delta^e_{~~s} - m_3^{~~e}m^3_{~~s}$ and so the above equation becomes

\beq \hat W_{[3,s]}m_3^{~~3} = m_{3s,u} - m_{33,u} m_{3s} m_3^{~~3}. \nonumber \eeq

\noindent Dividing through by $m_3^{~~3} = m_{33}^{-1}$ leads to the desired identity

\beq \hat W_{[3,s]} = m_{3s,u}m_{33} - m_{33,u} m_{3s}. \nonumber \eeq

\noindent To show the next identity take $\Gamma_{3n3}$ , the upper-triangular form leads to the simpler expression

\beq  m_{3[r,3]} m_n^{~~r} m_3^{~~3} = 0. \nonumber \eeq

\noindent Since $m_n^{~~r}$ is invertible, \eqref{C2m3rcon1} follows from this identity. Finally, taking $\Gamma_{3np}$

\beq m_{3[e,f]}m_n^{~~e}m_p^{~~f} + m_{n r,3}m_p^{~~r}m_3^{~~3} + m_{p r,3}m_n^{~~r}m_3^{~~3} = 0. \nonumber  \eeq

\noindent From the above identity $m_{3[3,r]} = 0$ this simplifies to be

\beq m_{3[r,s]}m_n^{~~r}m_p^{~~s} = - (m_{n f,3}m_p^{~~f} + m_{p f,3}m_n^{~~f}) m_3^{~~3} \eeq

\noindent but $m_{n f,3}m_p^{~~f} + m_{p f,3}m_n^{~~f} = m_n^{~~r} m_p^{~~s} g_{rs,3}$. Substituting this into the left-hand side we have

\beq m_{3[r,s]} m_n^{~~r} m_p^{~~s} = - m_n^{~~r} m_p^{~~s} g_{rs,3} m_3^{~~3}. \nonumber \eeq

\noindent The matrix $m_{3[r,s]}$ is anti-symmetric and $g_{rs}$ is symmetric, by symmetrizing the above we find \eqref{C2m3rcon2} and \eqref{gcon1} hold.

\end{proof}

Equations \eqref{C2m3rcon1} and \eqref{C2m3rcon2} imply that $m_{3e} = M_{,e}$ for some $M(u,x^k)$ which will be at least a function of the spatial coordinate $x^3$, otherwise the component $m_{33}$ vanishes and the matrix $m_{ie}$ is no longer invertible. Interestingly, the matrix components $m_{nr}$ will be independent of $x^3$ due to \eqref{gcon1} and the fact that $m_{n r,3}m_p^{~~r} $ is upper triangular, since for $n<p$ we have

\beq m_n^{~~r} m_p^{~~s} g_{rs,3} = m_{n r,3}m_p^{\ r} = 0. \nonumber \eeq

\noindent In addition, the requirement that $g_{rs,3}= m_{nr,3} = 0$ give further constraints on $M$ ; by expanding the metric $g_{rs} = m_{ir} m^i_{~~s}$ and differentiating we see that

\beq g_{rs,3} = (M_{,r}M_{,s})_{,3} + (m_{nr}m^n_{~~s})_{,3} = 0. \eeq

\noindent Choosing $s=r$ this becomes $M_{,r3}M_{,r} = 0$ and so $M$ must be a function of $x^3$ and possibly coordinate $u$. The vanishing of $m_{3r} = M_{,r}$ along with \eqref{C2Wcon1} and \eqref{Aijeqn} imply that $A_{3n} = 0$. This equation will be particularly helpful in the subsequent cases as an equation for the $W_n = m_n^{e} \hat W_e$, which in expanded form is

\beq - D_{[3} W_{n]} + W^k D_{k3n} = 0. \label{case2wn}\eeq

\noindent However, by looking at the definition of $D_{k3n}$ we see that it vanishes.

Collecting the above results we have the following proposition

\begin{prop}\label{prop:metricfn1}

The vanishing of $\Gamma_{3ia}$ imply the upper-triangular matrix $m_{ie}$ arising from the transverse metric of \eqref{CCNVKundt} takes the form,

\beq & m_{33} = M_{,3}(u, x^3), ~~~ m_{3r} = 0, ~~~ m_{nr} = m_{nr}(u,x^r).& \label{mei} \eeq

\noindent While $W_n$ must satisfy

\beq   D_3 (W_n) = 0. \label{Wndetermined} \eeq

\end{prop}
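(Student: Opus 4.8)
The plan is to treat Proposition~\ref{prop:metricfn1} as the organized consequence of the four constraints collected in Lemma~\ref{lem:metricfn1}, processing them in an order dictated by the upper-triangular form of $m_{ie}$. First I would combine the two antisymmetric conditions \eqref{C2m3rcon1} and \eqref{C2m3rcon2} into the single statement $m_{3[e,f]}=0$ ranging over all spatial indices $e,f\in[3,N]$. This says the covector $m_{3e}$ is curl-free on the transverse slices, so by the Poincar\'e lemma it is locally a gradient, $m_{3e}=M_{,e}$ for some scalar $M(u,x^k)$. Invertibility of the upper-triangular frame forces $m_{33}=M_{,3}\neq 0$, so $M$ must depend genuinely on $x^3$; this pins down the $m_{33}$ entry of \eqref{mei}.

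Next I would extract the $x^3$-independence of the lower block. Contracting \eqref{gcon1} with the inverse frame $m_n^{~r}m_p^{~s}$ and using the orthonormality relations reduces it to a symmetric expression in the derivatives $m_{nr,3}$; exploiting the upper-triangular form of $m_{ie}$ this can be solved entry by entry, diagonal components first and then off-diagonal, to give $m_{nr,3}=0$, i.e. $m_{nr}=m_{nr}(u,x^r)$. Substituting $g_{rs}=M_{,r}M_{,s}+\sum_n m_{nr}m_{ns}$ back into \eqref{gcon1}, the $m_{nr}$ contribution now drops out and leaves the quadratic constraint $(M_{,r}M_{,s})_{,3}=0$. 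The diagonal case $s=r$ gives $M_{,r}M_{,r3}=0$, and analyzing the full symmetric relation $M_{,r3}M_{,s}+M_{,r}M_{,s3}=0$ shows the gradient $M_{,r}$ cannot depend on $x^3$; one then concludes $M=M(u,x^3)$ and hence $m_{3r}=M_{,r}=0$, completing the form \eqref{mei}.

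For the statement \eqref{Wndetermined} on $W_n$, I would note that $m_{3r}=0$ (hence $m_3^{~r}=0$), together with the gauge choice $\hat W_3=0$ made at the outset, reduces \eqref{C2Wcon1} to $\hat W_{r,3}=0$. Substituting this and the now-known frame into the definition \eqref{Aij} (equivalently \eqref{Aijeqn}) shows $A_{3n}=0$. I would then write this vanishing in frame-derivative form \eqref{case2wn} as $-D_{[3}W_{n]}+W^kD_{k3n}=0$, and compute $D_{k3n}$ directly from \eqref{Dijk}: using $m_{33,r}=0$, $m_{3r,3}=0$ and $m_{nr,3}=0$, every surviving term carries a factor $m_n^{~3}=0$, so $D_{k3n}=0$. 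Since $W_3=m_3^{~3}\hat W_3=0$, the antisymmetric relation $D_{[3}W_{n]}=0$ collapses to $D_3W_n=0$, which is \eqref{Wndetermined}.

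I expect the main obstacle to be the deduction $m_{3r}=0$ from the quadratic constraint $(M_{,r}M_{,s})_{,3}=0$: one cannot simply read off $M_{,r}=0$, and the conclusion really rests on a rank/consistency argument for the symmetric bilinear relation above (possibly supplemented by the residual coordinate freedom preserving the Kundt form). The supporting index bookkeeping, namely collapsing the contracted condition to a triangular system for $m_{nr,3}$ and verifying $D_{k3n}=0$ term by term, is routine but is exactly where the upper-triangularity assumption does all the work, so I would carry it out carefully rather than assert it.
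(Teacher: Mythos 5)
Your proposal reproduces the paper's own argument essentially step for step: the curl conditions \eqref{C2m3rcon1}--\eqref{C2m3rcon2} give $m_{3e}=M_{,e}$ with $M_{,3}\neq 0$ by invertibility, the contraction of \eqref{gcon1} with the inverse frame plus upper-triangularity gives $m_{nr,3}=0$, the leftover diagonal constraint $M_{,r}M_{,r3}=0$ is used to dispose of $m_{3r}$, and \eqref{Wndetermined} follows from $A_{3n}=0$ via \eqref{C2Wcon1}, \eqref{Aijeqn} and \eqref{case2wn}, using $D_{k3n}=0$ and $W_3=0$ (the gauge \eqref{now3}). The one step you flag as the obstacle --- passing from $(M_{,r}M_{,s})_{,3}=0$ to $m_{3r}=M_{,r}=0$ --- is precisely the step the paper does not justify either: the text asserts that $M_{,r3}M_{,r}=0$ forces $M=M(u,x^3)$, whereas this relation only shows that $(M_{,r})^2$, hence $M_{,r}$, is independent of $x^3$, leaving open $M=M_0(u,x^3)+M_1(u,x^r)$ with $m_{3r}=M_{1,r}\neq 0$. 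So your concern identifies a genuine lacuna, but one inherited from the paper rather than introduced by your write-up; closing it requires some additional input (e.g.\ the residual spatial coordinate freedom $x'^e=f^e(u,x^f)$ or a further rotation of the $m^n$ preserving $m^3$) that the paper does not invoke explicitly.
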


The remaining Killing equations are then:

\beq &D_2X_2 + J_3 X_3 =0 & \label{case2ke1}\\
& D_3X_2 + D_2X_3 - J_3 X_1- B_{33}X_3 = 0 & \label{case2ke2} \\
& D_nX_2 - J_n X_1  = 0 & \label{case2ke3} \\
& D_3X_3 + B_{(33)}X_1 = 0& \label{case2ke4} \\
& D_nX_3 + 2B_{(3n)}X_1  = 0& \label{case2ke5} \\
& B_{(mn)}X_1 = 0.& \label{case2ke6}  \eeq

\noindent From equation \eqref{case2ke6} we have two subcases to consider. Either $X_1 = 0$ or $B_{(mn)} = 0$.

\end{subsection}

\begin{subsection}{Case 2.1: $X_1 = 0$, $B_{(mn)} \neq 0$}



\noindent If $F_3=0$, $F_2$ becomes a constant and so $X$ is some scaling of $\ell$, we will assume that $F_3 \neq 0$. From the third equation \eqref{case2ke3}

\beq  D_n F_2 = m_n^{~~e}F_{2,e} = 0. \nonumber \eeq

\noindent Multiplying by $m^n_{~~f}$ so that $m_n^{~~e}m^n_{~~f} = \delta^e_{~~f} - m_3^{~~e} m^3_{~~f}$

\beq F_{2,r} = m_3^{~~3}m^3_{~~r} F_{2,3}. \eeq

\noindent By Proposition \eqref{prop:metricfn1} we see that $m^3_{~~r} = 0$ and so the left hand side vanishes implying that $F_2$ is independent of all spacelike coordinates except possibly $x^3$.

\noindent Thus the remaining components of $X$ will be the following arbitrary functions,

\beq X_2 &=& F_2(u,x^3) \\
X_3 &=& F_3(u). \eeq

\noindent Expanding equation \eqref{case2ke2} we find that the constraining equation for $m_{33}$ is

\beq \frac{m_{33,u}}{m_{33}}  = \frac{- D_3 F_2 - D_2F_3}{F_3}. \label{case21m33} \eeq

\noindent While from \eqref{case2ke1} we have

\beq H = - \int \frac{ m_{33} D_2F_2}{F_3} dx^3 + A_5(u, x^r) \label{case21H}. \eeq


\noindent Thus, \eqref{case21H} and \eqref{case21m33} are equations for $H$ and $m_{33}$. The only constraint given for the $W_n$ comes from Proposition \eqref{prop:metricfn1}, i.e, they are all independent of $x^3$. This is just Case 1.2 with $X_1 = 0$ and the additional constraints in Proposition \eqref{prop:metricfn1}.

\end{subsection}

\begin{subsection}{Case 2.2: $B_{(mn)} = 0$, $X_1 \neq 0$}

Thanks to Proposition \eqref{prop:metricfn1}, we may repeat a similar calculation as Case 1.1 except with $B_{(np)}$ to show that for $n<p$ the vanishing of $B_{(np)}$ implies

\beq m_{n r , u} = 0. \label{case22mmr}\eeq

\noindent Furthermore, by proposition \eqref{prop:metricfn1}, the special form of $m_{ie}$ implies that $m_r^{~~3} = 0$, the only non-zero component of the tensor $B$ is $B_{33}$.

\noindent Since $v \in (-\infty, \infty)$ we may expand the Killing equations into orders of $v$, using \eqref{kvcomps} and the definition of the frame derivatives \eqref{framed}, to find a system of equations for $F_1$. 

\beq & D_2D_2F_1 + J_3 D_3F_1 = 0& \label{f1case22a} \\
& D_3 D_2 F_1 + D_2 D_3F_1 - B_{33}D_3  F_1  = 0& \label{f1case22b} \\
& D_n D_2 F_1 = 0& \label{f1case22c} \\
& D_3 D_3 F_1 = 0& \label{f1case22d} \\
& D_n D_3 F_1 = 0. & \label{f1case22e} \eeq

\noindent Along with another system of equations involving $F_2$ and $F_3$

\beq & H D_2F_1  + D_2 F_2 + J_3 F_3 = 0& \label{kecase22a} \\
& H D_3F_1  + D_{3} F_{2} + D_2 F_3 - J_3 F_1- B_{33}F_3 = 0 &  \label{kecase22b} \\
& D_n F_2 + W_n D_2 F_1 - J_n F_1 = 0& \label{kecase22c} \\
&  D_3 F_3 + B_{33} F_1 = 0& \label{kecase22d} \\
& W_n D_3 F_1 + D_n F_3 = 0.&  \label{kecase22e} \eeq

To begin, the special form of $m_{ie}$ from Proposition \eqref{prop:metricfn1} along with  equation \eqref{f1case22e} lead to the conclusion that $F_1$ must be independent of $x^3$ or $x^r$ . Note that if $F_{1,3} = 0$ then we have Case 1, with the added constraints $\Gamma_{3n2} = \Gamma_{3mi} = 0$. The analysis is not difficult, Case 1.23 may be omitted since $m_{3r}= 0$ while in Case 1.21 and 1.22, equation \eqref{mnr,u} is satisfied immediately, \eqref{m3r,u} now implies that $F_{3,r} = 0$. Only equation \eqref{m33,u} still holds, these cases are given in the table at the end of this section

It will be assumed that $F_{1,3} \neq 0$, then by expanding equation \eqref{f1case22d} the following relation between $F_1$ and $m_{33}$ is found

\beq \frac{m_{33,3}}{m_{33}} = \frac{F_{1,33}}{F_{1,3}}. \label{m33comma3} \eeq

\noindent Rewriting the term $D_2 D_3 (F_1) $ in \eqref{f1case22b} using the commutation relations \eqref{com} with $i=3$ as

\beq D_2 D_3 F_1 = D_3 D_2 F_1 - B_{33} D_3 F_1 \label{d2d4f1} \eeq

\noindent this is substituted into \eqref{f1case22b} yielding

\beq 2(D_3 D_2 F_1 - B_{33}D_3  F_1 ) = 0. \label{f1case22bnew} \eeq

\noindent The expanded form of \eqref{f1case22bnew} gives another relation between $m_{33}$ and $F_1$ 

\beq \frac{m_{33,u}}{m_{33}} = \frac{F_{1,3u}}{F_{1,3}}. \label{m33commau} \eeq

\noindent Thus $m_{33}(u,x^3)$ is entirely defined by $F_1$.

We may solve for $H$ and the $W_n$ algebraically from \eqref{kecase22b} and \eqref{kecase22e}

\beq H &=& \frac{D_3 D_2 F_1}{D_3(F_1)^2} F_3 - \frac{D_2^2 F_1}{D_3 (F_1)^2} F_1 - \frac{ 2D_{(2 } F_{ 3) } }{ D_3 F_1 } \label{case22H} \\
W_n &=& - \frac{D_n F_3}{D_3F_1}. \label{case22Wn} \eeq

\noindent Notice that by integrating \eqref{kecase22d}, $F_3$ is of the form:

\beq F_3 = \int \frac{m_{33} F_1 D_3D_2 F_1 }{D_3F_1} dx^3 + A_6(u, x^r) \label{f3case22} \eeq

\noindent Substituting \eqref{case22Wn} and \eqref{case22H} into \eqref{kecase22a} and \eqref{kecase22c} yields several equations for $F_2$

\beq & D_2 F_2 - \frac{D_2F_1 D_3 F_2 }{D_3F_1} =  D_2( \frac{F_3 D_2F_1 }{D_3F_1} ) - \frac{F_3 D_3(D_2F_1)^2 }{2(D_3F_1)^2} + \frac{F_1 D_2(D_2F_1)^2 }{2(D_3F_1)^2} & \label{f2case22a} \\
& D_3(F_1) D_3(F_1 D_nF_2 ) = D_3( F_1 D_2 F_1) D_nF_3. &  \label{f2case22b} \eeq

\noindent Hence $F_2(u,x^e)$ must depend on the choice of the arbitrary functions $F_1(u, x^3)$ and $A_6(u, x^r)$.

\end{subsection}

\begin{subsection}{Summary of Results} \label{SumOfResults}

We have considered the possibility of an additional Killing form in a $CCNV$ spacetime, where the metric functions $H$, $\hat W_i$ and $g_{ef}$. Given the arbitrary form of the $CCNV$ metric in equation \eqref{CCNVKundt}, we used a coordinate transformation \eqref{now3} to eliminate $\hat W_3$; this is done to simplify the constraints on the metric functions. Next the null frame \eqref{nullframe} was rotated so that $m^3$ is parallel with the spatial part of $X$. Due to the $QR$ decomposition it is always possible to treat the matrix, $m_{ie}$ as an upper-triangular matrix, this is assumed through-out the paper.

The first four equations \eqref{killeqn0} -- \eqref{killeqn3} imply that the components of the Killing co-vector are given by \eqref{kvcomps}. By applying the commutator relations for the frame derivatives \eqref{framed} to the Killing equation splits the analysis into two simpler cases, depending on whether $D_3 X_1=0$ or $\Gamma_{3m2}= \Gamma_{3mj} = 0$. While the first case requires that $X_1$ is independent of $x^3$, the implications of $\Gamma_{3m2} = \Gamma_{3mj} = 0$ lead to the  constraints on the $W_n$ and the matrix $m_{ie}$ given in Proposition \eqref{prop:metricfn1}.  Both cases are summarized in the tables below.

\begin{table}[ht]
\begin{center} 
\begin{tabular}{c|c c c c c c }
\hline
Case & $X_1$ & $F_2$ & $F_3$ & $m_{ie}$ & $H$ & $W_n$  \\ [0.5ex]
\hline
& & & & & & \\
1.11 & $u$ & \eqref{case11f2} & 0 & $m_{ie,u} = 0$ & \eqref{C11H} & \eqref{C11Wn} \\ [1ex]
1.12 & $1$ & $F_{2,u} = 0 $ & 0 & $m_{ie,u} = 0$  & \eqref{C11H0} & \eqref{C11Wn0} \\ [1ex]
1.21 &$u$ & $F_2$ & $F_3$ & \eqref{m33,u} -- \eqref{m3r,u} & \eqref{C12H} &  \eqref{C12Wn} \\ [1ex]
1.22 & $1$ & \eqref{C12f20} & $F_3$  & \eqref{m33,u} -- \eqref{m3r,u} & \eqref{C12H0} &  \eqref{C12Wn0} \\ [1ex]
1.23 & 0 & $F_2$ & $F_3$ & \eqref{mnr,u0}, \eqref{m33,u0} & \eqref{C12H00} & \eqref{C12Wn00} \\
$m_{3r,3} \neq 0$ & & & & \eqref{m3r,3} &  &  \\ [1ex]
1.24 & 0 & $F_2$ & $F_3$ & \eqref{mnr,u0}, \eqref{m33c121} & \eqref{C12H00} & \eqref{C12Wn00} \\
$m_{3r,3} =0$ & & & &  &  &  \\
\hline
\end{tabular}
\caption{ Summary of analysis in Case 1 }
\end{center}
\label{table:case1}
\end{table}

\begin{table}[ht]
\begin{center} 
\begin{tabular}{c|c c c c c c }
\hline
Case & $X_1$ & $F_2$ & $F_3$ & $m_{ie}$ & $H$ & $W_n$  \\ [0.5ex]
\hline
& & & & & & \\
2.1  &0 & $F_{2,r} = 0$ & $F_{3,e} = 0$ & \eqref{m33,u0} & \eqref{C12H00} & \eqref{C12Wn00} \\ [1ex]
2.21 & $u$ & \eqref{case11f2} & 0 & $m_{ie,u} = 0$ & \eqref{C11H} & \eqref{C11Wn} \\ [1ex]
2.22 & $1$ & $F_{2,u} = 0$ & 0 & $m_{ie,u} = 0$  & \eqref{C11H0} & \eqref{C11Wn0} \\[1ex]
2.23 & $u$ & $F_2$ & $F_{3,r} = 0 $ & \eqref{m33,u}, \eqref{case22mmr} & \eqref{C12H} &  \eqref{C12Wn} \\[1ex]
2.24 & $1$ & \eqref{C12f20} & $F_{3,r} = 0 $  & \eqref{m33,u},  \eqref{case22mmr} & \eqref{C12H0} &  \eqref{C12Wn0} \\[1ex]
2.25 & 0 & $F_2$ & $F_{3,e} = 0$ & \eqref{m33,u0}, \eqref{case22mmr} & \eqref{C12H00} & \eqref{C12Wn00} \\[1ex]
2.26 & $F_{1,r} = 0$ & \eqref{f2case22a} & \eqref{f3case22} & \eqref{m33commau}, \eqref{m33comma3}, & \eqref{case22H} & \eqref{case22Wn} \\
& & \eqref{f2case22b} & & \eqref{case22mmr} & & \\ [1ex]
\hline
\end{tabular}
\caption{Summary of Case 2, where Proposition \eqref{prop:metricfn1} implies $D_3(W_n) = 0$ and $m_{ie}$ takes the special form \eqref{mei} }
\end{center}
\label{table:case2}
\end{table}

\end{subsection}

\newpage

\begin{subsection}{Killing Lie Algebra}

In \eqref{SumOfResults} we found that there are only three particular forms for the Killing vector in those $CCNV$ spacetimes admitting an additional Killing vector, depending on the choice of $X_1$. The three cases depend on whether $X_1$ is linear in $u$, $X_1$ is a constant or $X_1$ is a function of $u$ and $x^3$. The remaining functions involved with $X_2$ and $X_3$ are functions of $u$ and $x^e$, satisfying the appropriate equations in the above two tables . We will label those spacetimes admitting an additional Killing vector by its type; using \eqref{kvcomps} we may write the three possible types for the Killing vector $X$ as

\beq  &X_A& = c n + F_2(u,x^e) \ell + F_3(u,x^e) m^3  \nonumber \\
 &X_B& = u n + [F_2(u,x^e)-v] \ell + F_3(u,x^e) m^3 \nonumber \\
 &X_C& = F_1(u,x^3) n + [F_2(u,x^e - D_2F_1 v] \ell + [F_3 - D_3F_1 v] m^3 . \nonumber \eeq

\noindent To see if these spacetimes admit even more Killing vectors we will examine each case and consider the commutator with $\ell$. Using the frame formalism, the commutator of two vector-fields $X = X^a e_a$, $Y = Y^b e_b$ is

\beq [X,Y] = X^a e_a(Y^b) - Y^a e_a(X^b) + 2X^a Y^c\Gamma^b_{[ac]} \label{FrameComm} \eeq

\noindent When $Y = \ell$ we have that $Y^c = \delta_1^c$ and from \eqref{com} $\Gamma^b_{1a} = 0$ so the commutator is

\beq [X, \ell ] = - \ell(X^b). \eeq

\noindent Thus in the Type A spacetimes there are no other required Killing vectors except $\ell$ and $X$. Similarly in the Type B spacetimes, the commutator of $\ell$ and $X$ is

\beq [X_B, \ell] = - \ell \eeq

\noindent this is just a scaling of a known vector so we may conclude in general that Type B Spacetimes contain no additional Killing vectors other than $\ell$ and $X$.

The most general case is more interesting because the commutator of $X$ and $\ell$ yields a new Killing vector

\beq Y_C = [\ell, X_C ] = D_2F_1 \ell + D_3F_1 m_3. \label{OhNoY} \eeq

\noindent  Clearly this will be a space-like Killing vector for all choices of $F$ since its magnitude is $|Y_C|$ = $(D_3F_1)^2 > 0$. The commutator of $Y_C$ with $\ell$ vanishes because $F_1$ is a function of $u$ and $x^3$, however the commutator of $Y_C$ and $X_C$ cannot in general be set to zero. A quick calculation gives $Z_C = [X_C,Y_C]$

\beq Z_C &=& [F_3 D_3D_2F_1 - D_3F_1 D_3F_2 + (D_2F_1)^2] \ell \nonumber \\
& & - (D_3F_1)^2 n + [D_2F_1 D_3F_1] m_3 \label{OhNoZ} \eeq

Thus because we assumed $F_1 \neq 0$ and $D_3F_1 \neq 0$ we may never set $Z_C = 0$ due to the coefficients of $n$. The Type C spacetimes admit at least one additional spacelike Killing vector.

\end{subsection}

\end{section}

\begin{section}{$CSI$ $CCNV$ spacetimes possessing an additional Killing vector} \label{CSICCNV}

In section \ref{CSICriteria} it was shown that if a $CCNV$ spacetime has constant scalar curvature invariants to all orders its transverse metric $g_{ef}$ must be locally homogeneous. Applying this result we may write down the constraints for a $CSI$ $CCNV$ spacetime to admit an additional Killing vector by choosing an appropriate locally homogeneous Riemannian manifold for the transverse space $g_{ef}$. This choice may affect the components of the Killing vector $X$.

Due to the local homogeneity of $g_{ef}$ one may perform a coordinate transformation so that the matrix $m_{ie}$ is independent of $u$. Looking at the tables in section \ref{SumOfResults} we note that Cases 1.11 - 1.13 and 2.21 - 2.23 already require that $m_{ie}$ be independent of $u$ and there are no constraints on the Killing vector components involving $m_{ie}$. Therefore the $CSI$ spacetimes are the subcases of these cases where the transverse metric is a locally homogeneous. The remaining cases are more interesting since they involve a non-zero spatial component of $X$.

\begin{subsection}{Case 1} \label{CSICase1}

In Case 1.2 equation \eqref{m33,uode} implies that

\beq F_{3,3} = 0 \label{f33bye}\eeq

\noindent while from \eqref{mnr,u}

\beq m_{nr,3} = 0. \label{gef0} \eeq

\noindent The remaining equations \eqref{mnr,u}, \eqref{m33,uode}, \eqref{m3r,u} arose from \eqref{3b2case1}, this may be rewritten as a differential equation for $F_3$

\beq D_n(log F_3) = \Gamma_{3n3}.  \label{ohgoodlord} \eeq

\noindent It is possible to derive even more constraints on the transverse space through the commutation relations $[D_i, D_n]$ applied to $F_3(u,x^r)$. To start, we note that because of \eqref{com} and the $u$-independence of $m_{ie}$, $[D_2, D_n](logF_3) $ becomes

\beq D_nD_2(logF_3) = 0. \label{ohgoodson} \eeq

\noindent Next we consider $[D_3, D_n]$, since  $D_3 (log F_3) = 0$ and $D_3(m_n^{~r})= 0 $ the commutator is

\beq [D_3, D_n](logF_3) = D_3(\Gamma_{3n3}) = 0. \nonumber \eeq

\noindent However using \eqref{com} with $k=3$ we find that

\beq [D_3, D_n](logF_3) = \Gamma^m_{~[3n]} D_m(logF_3). \nonumber \eeq

\noindent Thus we find two new  constraints on the connection coefficients arising from the transverse metric

\beq D_3(\Gamma_{3n3}) = 0 \label{gef1} \\
\Gamma^m_{~[3n]} \Gamma_{3m3} = 0 \label{gef2}. \eeq

\noindent With $k, j > 3$ in \eqref{com} we find that

\beq [D_n, D_m](logF_3) = 2\Gamma^p_{~[nm]}\Gamma_{3p3} \nonumber \eeq

\noindent whereas from \eqref{ohgoodlord} we have that

\beq [D_n, D_m](logF_3) = D_{n}\Gamma_{3m3} - D_{m}\Gamma_{3n3}. \nonumber \eeq

\noindent Equating the two gives another constraint on the transverse space,

\beq \Gamma^p_{~[nm]}\Gamma_{3p3} =  D_{n}\Gamma_{3m3} - D_{m}\Gamma_{3n3}. \label{gef3} \eeq

\noindent Thus if we require the Killing vector to have a spatial component the connection coefficients arising from the transverse metric must satisfy the equations \eqref{gef0}, \eqref{gef1}, \eqref{gef2} and \eqref{gef3}. We will assume such a transverse metric has been found in order to continue with the analysis.

Reconsidering  \eqref{m3r,u} and solving for $m_{3r,3}$ leads to another differential equation

\beq m_{3r,3} = F_3 \left( \frac{m_{33} }{F_3} \right )_{,r} \label{1CSIsplitter} \eeq

and two possibilities,  either $m_{3r,3}$ vanishes or not.

If $m_{3r,3} \neq 0$ we may integrate the above equation to find an expression for $m_{3r}$

\beq m_{3r} = \int \left( \frac{m_{33}}{F_3} \right)_{,r} F_3 dx^3 + B_r(x^s). \label{1CSIm3ra} \eeq

\noindent Differentiating with respect to $u$ we must have

\beq  \left( \frac{ F_{3,r} }{F_3} \right)_{,u}  = 0. \label{1CSIm33a} \eeq

\noindent So that $F_3$ is of the form

\beq log(F_3) = g_3(u) + f_3(x^r). \label{1CSIf3} \eeq

\noindent The form of $log(F_3)$ above agrees with the differential equation given in \eqref{ohgoodlord} and \eqref{ohgoodson}, where we expect $f_3$ is determined by the $\Gamma_{3n3}$. In this case, \eqref{gef0} and \eqref{1CSIm3ra} are the only equations for the transverse metric so far. The remaining constraints on the metric functions will vary for each subcase depending on the choice of $X_1 = c_1 u + c_2$.

If $c_1 \neq 0$, we see that $H$ is given by

\beq H = - \frac{1}{u} (u D_2F_2 + F_3 D_3F_2 + F_3 D_2F_3 ). \label{1CSIH0} \eeq

\noindent While the $W_n$ satisfy the determining equation

\begin{equation}
D_{2}( u W_n ) + F_3 D_3 W_n + D_n(F_2 - u H ) = 0 \, . \label{1CSIWn0}
\end{equation}

If $c_1 = 0$, $F_2$ is no longer arbitrary it must satisfy the following equation

\beq D_2F_2 + F_3 D_3F_2 + F_3 D_2F_3   = 0 \label{1CSIf2} \eeq

\noindent $H$ may be written as


\beq H &=& F_2 + \int m_{33}D_2F_3 dx^3 +  A_1(u, x^r), \label{1CSIH1} \eeq

\noindent and the equation for $W_n$ is now

\beq &D_2 W_n + F_3 D_3W_n  = D_nF_2 - D_nH . \label{1CSIWn1} & \eeq

If $c_1 = c_2 = 0$ then $X_1$ vanishes, we know that this turns \eqref{m33,u} -- \eqref{m3r,u} into the same set of differential equations \eqref{f33bye}, \eqref{gef0} and \eqref{ohgoodlord}.

The remaining metric functions $H$ and the $W_n$ are

\beq H = - \int \frac{ m_{33} D_2F_2}{F_3} dx^3 + A_2(u, x^r), \label{1CSIH2} \\
W_n = - \int \frac{ m_{33} D_nF_2}{F_3} dx^3 + E_n(u, x^r). \label{1CSIWn2} \eeq

If $m_{3r,3} = 0$ we find the following expression for $m_{33}$

\beq D_n \left( \frac{ m_{33} }{ F_3 } \right) = 0. \eeq.

\noindent This leads to the following equality

\beq D_n(log m_{33}) = D_n(log F_3) = \Gamma_{3n3} \eeq

\noindent Expanding this out we find the constraint,

\beq m_{33,3}m^3_{~r} = (m^2_{33})_{,r}  \label{gef4} \eeq

The equations for the metric functions follow as above in the various cases arising from $X_1 = c_1 u + c_2$.

\end{subsection}

\begin{subsection}{Case 2}

If $m_{ie,u}$ vanishes, Case 2.1 and Case 2.26 are now the same case. Due to Proposition \ref{prop:metricfn1}, equations  \eqref{mnr,u} \eqref{m33,uode} \eqref{m3r,u} imply that

\beq F_{3,e} = 0. \eeq

>From \eqref{m33,u0} we find the familiar equation for $F_2$:

\beq F_{2,3} = - m_{33} F_{3,u}. \label{2CSIm33} \eeq

\noindent The metric function $H$ is given by equation \eqref{case21H} and $W_n$ may be arbitrary functions of $u$ and $x^r$.

In Cases 2.24 - 2.26 the vanishing of $m_{ie,u}$ causes \eqref{m33,u} and \eqref{m3r,u} to imply

\beq F_{3,e} = 0. \label{2CSIf3a} \eeq

\noindent So $F_3$ is only a function of $u$. With this in mind the equations for the remaining metric functions are the same as in \eqref{CSICase1}, $m_{3r,3} = 0$, with the additional constraints from Proposition \eqref{prop:metricfn1}.

In Case 2.27, equation \eqref{m33commau} now implies

\beq F_{1,u3} = 0. \eeq

\noindent The function $F_1$ must be of the form

\beq F_1 = g_1(u) + f_1(x^3) \label{C2f1a} \eeq

\noindent $g_1$ is an arbitrary function of $u$ however, $f_1$ satisfies the differential equation in \eqref{m33comma3}

\beq D_3(log m_{33}) = D_3 log( f_{1,3}). \nonumber \eeq

\noindent Letting $c_3$ be arbitrary constant we have by integrating that

\beq f_1 =  m_{33} + c_3  \label{C2f1b} \eeq

\noindent Combining \eqref{C2f1a} and \eqref{C2f1b} yields

\beq F_1 = m_{33} + g_1 + c_3 \label{2CSIf1} \eeq

>From \eqref{f3case22}, $F_3$ must be a function of $u$ and $x^r$, the equations \eqref{f2case22a} and \eqref{f2case22b} are

\beq &D_2 F_2 - \frac{D_2F_1}{D_3F_1} D_3 F_2 = (\frac{D^2_2F_1}{D_3F_1})F_3 + \frac{D_2F_1}{D_3F_1} D_2F_3 + \frac{F_1 D_2F_1 D^2_2F_1 }{D_3F_1^2},& \label{2CSIf2a} \\
&D_3(F_1 D_n F_2) = D_2F_1D_nF_3.& \label{2CSIf2b} \eeq

\noindent Assuming $g_1' \neq 0$, dividing through \eqref{2CSIf2a} by $\frac{ g_1' }{\dot f_1}$ it is possible to solve for $D_2F_3$, while dividing from $g_1'$ in \eqref{2CSIf2b} we have $D_nF_3$, hence it is possible to solve for $F_3$ entirely in terms of $F_2$ and $g_1'$.

Simplifying \eqref{case22H} and \eqref{case22Wn} we find that,

\beq H &=& - \frac{D_2^2 F_1}{D_3 (F_1)^2} F_1 - \frac{D_{2} F_3}{D_3 F_1} - \frac{D_{3} F_2}{D_3 F_1} \label{2CSIH} \\
W_n &=& - \frac{D_n F_3}{D_3F_1}. \label{2CSIWn} \eeq

If $D_2F_1 \neq 0$, dividing \eqref{2CSIf2a} by this then substituting this into the equation for $H$ gives,

\beq H &=& (\frac{D^2_2 F_1}{D_3F_1})F_3 - D_2F_2. \eeq

\noindent We note that $F_2$ may be entirely an arbitrary function of $u$ and $x^3$.

\end{subsection}

\begin{subsection}{Summary of Constraints}

As in the previous section we summarize our results for the existence of an additional null Killing covector in a $CSI$ $CCNV$ spacetime. In order for a $CCNV$ spacetime to be $CSI$, we found in section \ref{CSICriteria} that the transverse space must be locally homogeneous. This allows one to choose a coordinate chart locally such that

\beq m_{ie,u} = 0 \nonumber \eeq

\noindent so that many of the differential equations given in the previous section are simpler. The results of this analysis are summarized below in the two tables. In all cases the transverse metric is locally homogeneous, although it should be noted that in subsection \eqref{CSICase1}, if $F_3 \neq 0$ then the transverse space must satisfy the following constraints

\beq m_{nr,3} &=& 0 \nonumber \\
 D_3(\Gamma_{3n3}) &=& 0 \nonumber \\
\Gamma^m_{~[3n]} \Gamma_{3m3} &=& 0 \nonumber \\
\Gamma^p_{~[nm]}\Gamma_{3p3} &=&  D_{n}\Gamma_{3m3} - D_{m}\Gamma_{3n3}. \eeq

\noindent We also remind the reader that if $m_{3r,3} = 0$ equation \eqref{gef4} holds, so that $m_{33}$ is independent of $x^r$. In the second case the matrix $m_{ie}$, related to the locally homogeneous transverse metric, must satisfy \eqref{mei} in Proposition \eqref{prop:metricfn1}.

\newpage

\begin{table}[ht]
\begin{center} 
\begin{tabular}{c|c c c c c }
\hline
Case & $X_1$ & $F_2$ & $F_3$ & $H$ & $W_n$  \\ [0.5ex]
\hline
& & & & & \\
1.11 & $u$ & \eqref{case11f2} & 0  & \eqref{C11H} & \eqref{C11Wn} \\ [1ex]
1.12 & $1$ & $F_{2,u} = 0 $ & 0 & \eqref{C11H0} & \eqref{C11Wn0} \\
\hline
\end{tabular}
\caption{ Summary of Killing equations analysis in Case 1 for a $CSI$ $CCNV$ spacetime, when $F_3 = 0$}
\end{center}
\label{table:CSIcase1a}
\end{table}

\begin{table}[ht]
\begin{center} 
\begin{tabular}{c|c c c c c }
\hline
Case & $X_1$ & $F_2$ & $F_3$ & $H$ & $W_n$  \\ [0.5ex]
\hline
& & & & & \\
1.21a &$u$ & $F_2$ & \eqref{1CSIf3} &  \eqref{1CSIH0} &  \eqref{1CSIWn0} \\ [1ex]
1.22a & $1$ & \eqref{1CSIf2} & \eqref{1CSIf3} & \eqref{1CSIH1} &  \eqref{1CSIWn1} \\ [1ex]
1.23a & 0 & $F_2$ & $F_3$ & \eqref{1CSIH2} & \eqref{1CSIWn2} \\
$m_{3r,3} \neq 0$  & & & \eqref{1CSIm3ra} &  &  \\
\hline
\end{tabular}
\caption{ Summary of Killing equations analysis in Case 1 for a $CSI$ $CCNV$ spacetime, when $F_3 \neq 0$ and $m_{3r,3} \neq 0$}
\end{center}
\label{table:CSIcase1b}
\end{table}

\begin{table}[ht]
\begin{center} 
\begin{tabular}{c|c c c c c }
\hline
Case & $X_1$ & $F_2$ & $F_3$ & $H$ & $W_n$  \\ [0.5ex]
\hline
& & & & & \\
1.21b &$u$ & $F_2$ & $F_{3,3}=0$ & \eqref{1CSIH0} &  \eqref{1CSIWn0} \\ [1ex]
1.22b & $1$ & \eqref{1CSIf2} & $F_{3,3}=0$  & \eqref{1CSIH1} &  \eqref{1CSIWn1} \\ [1ex]
1.23b & 0 & $F_2$ & $F_{3,3} = 0$ & \eqref{1CSIH2} & \eqref{1CSIWn2} \\
\hline
\end{tabular}
\caption{ Summary of Killing equations analysis in Case 1 for a $CSI$ $CCNV$ spacetime, when $F_3 \neq 0$ and $m_{3r,3} = 0$}
\end{center}
\label{table:CSIcase1c}
\end{table}

\begin{table}[ht]
\begin{center} 
\begin{tabular}{c|c c c c c c }
\hline
Case & $X_1$ & $F_2$ & $F_3$ & $H$ & $W_n$  \\ [0.5ex]
\hline
& & & & & & \\
2.1 &0 & $F_{2,r} = 0$ & $F_{3,e} = 0$ & \eqref{C12H00} & \eqref{C12Wn00} \\
 & & & & $m_{ie,u} = 0$ & & \\[1ex]
2.21& $u$ & \eqref{case11f2} & 0 & \eqref{C11H} & \eqref{C11Wn} \\
$F_{2,3} \neq 0$ & & & & & &  \\ [1ex]
2.22& $1$ & $F_{2,u} = 0$ & 0 & \eqref{C11H0} & \eqref{C11Wn0} \\[1ex]
2.23& $u$ & $F_2$ & $F_{3,e} = 0 $ &  \eqref{1CSIH0} &  \eqref{1CSIWn0} \\ [1ex]
2.24& $1$ & \eqref{1CSIf2} & $F_{3,e} = 0 $  & \eqref{1CSIH1} &  \eqref{1CSIWn1} \\ [1ex]
2.25& \eqref{2CSIf1} & \eqref{2CSIf2a} & $F_{3,3} = 0$ & \eqref{2CSIH} & \eqref{2CSIWn} \\
& & \eqref{2CSIf2b} & & & & \\ [1ex]
\hline
\end{tabular}
\caption{Summary of Case 2 for a $CSI$ $CCNV$ spacetime - Proposition \eqref{prop:metricfn1} implies $D_3(W_n) = 0$ and $m_{ei}$ takes the special form \eqref{mei} }
\end{center}
\label{table:CSIcase2}
\end{table}

\end{subsection}

\end{section}

\begin{section}{Application: Non-spacelike isometries}

The metric for $CCNV$ spacetimes \eqref{CCNVKundt} must be independent of $v$, varying this coordinate value leaves the metric unchanged. With regards to the set of $CCNV$ spacetimes admitting an additional Killing vector, a good question to ask is: which of these spacetimes admit a non-spacelike Killing vector for all values $v$? In \cite{mcnutt} this  was considered for $CSI$ $CCNV$ spacetimes, however the approach taken differs from the one presented in this paper.

While the frame was rotated so that the Killing covector $X$ has one spatial component $X_3$ and the matrix $m_{ie}$ is upper-triangular, a coordinate transformation was made to eliminate $H$ instead of $W_3$. Regardless of these coordinate changes, the equations \eqref{killeqn0} -- \eqref{killeqn3} lead to the same form for the Killing covector components given in \eqref{kvcomps}. The non-spacelike requirement for the Killing vector field maybe written as

\beq D_3(X_1)^2 v^2 + 2(D_2(X_1) X_1 - D_3(X_1) F_3) v + F_3^{~2} - 2 X_1 F_2 \leq 0 \nonumber \eeq

\noindent Since $v \in (-\infty, \infty)$ this implies that $D_3(X_1)$ must vanish and either $X_1$ is independent of $u$ or $X_1 = 0$.
Thus either $X_1$ is constant or it vanishes entirely. This requirement along with the $u$ independence of $m_{ie}$ from the $CSI$ condition lead to a simpler set of equations for the remaining components of $X$, as such the commutator relations were ignored in \cite{mcnutt} and the analysis was done using the coordinate basis instead of the frame formalism. As such the results of \cite{mcnutt} agree with the results given in this paper, but only as special subcases of 1.1 and 1.2 in the Table \ref{table:nullkv} given below.

Instead of the labour intensive approach given in \cite{mcnutt} we may use the result of the previous section to find an answer to the question of non-spacelike Killing vectors in $CCNV$ spacetimes. Since $X_1$ must be constant from the non-spacelike requirement we have that Cases 1.11, 1.21, 2.21, 2.23 and 2.26 are no longer admissible. In the remaining cases the only constraint left is for $F_2$ and $F_3$ is

\beq F_3^{~2} - 2 X_1 F_2 \leq 0. \label{ntl} \eeq

\noindent Hence we will divide the analysis into two cases depending on whether the vector is timelike or null.

\begin{subsection}{Timelike Killing vector fields}

If we allow $X$ to be a timelike Killing vector field, we have the constraint that

\beq F_3^2 < 2c_2 F_2 \label{F3LessThanF2} \eeq

\noindent and so the cases with $X_1=0$ (1.23,1.24,2.1 and 2.25) are no longer valid since $F_3$ is a real-valued function and with $F_3^2 < 0.$ which is impossible and so these cases will be disregarded. In the remaining cases (1.12, 1.22, 2.22 and 2.24) equation \eqref{F3LessThanF2} is an additional constraint on $F_3$ and $F_2$. Thus the Killing vector field $X$ will always be of the form

\beq n + F_2(u,x^e) \ell + F_3(u,x^r) m^3. \eeq

\noindent The requirement that $F_3^2 < 2F_2$ does not affect the equations in the various cases.

\end{subsection}

\begin{subsection}{Null Killing vector fields}

If $X$ is null, and $c_2 \neq 0$ we can rescale $n$ so that \eqref{ntl} implies that $2F_2 = F_3^{~2}$, from which we naturally find the helpful identity

\beq D_a(F_2) = D_a(F_3) F_3. \label{idnull} \eeq

\noindent If $F_3$ vanishes as in Case 1.12, $F_2$ must vanish as well, so $X$ takes the form

\beq X = n. \label{nullX0} \eeq

\noindent the remaining equations for the metric functions are now

\beq H &=& A_0(u,x^r) \label{nullH0} \\
W_n &=& \int D_n(A_0) du + C_n(x^e) \label{nullWn0}. \eeq

\noindent The transverse metric is unaffected by \eqref{ntl}.

In Case 1.22, $X$ is now

\beq X = n + \frac{F_3^2}{2} \ell + F_3 m^3 \eeq

\noindent taking equation \eqref{C12f20} we find a differential equation for $F_3$

\beq D_2(F_3) + D_3(F_3) F_3 = 0. \label{nullodef3} \eeq

\noindent This allows us to rewrite $H$ as

\beq H &=& A_2(u,x^r) \label{nullH1}. \eeq

\noindent The constraining equation for the $W_n$ is

\beq D_2(W_n) + D_3(W_n)F_3 = D_n(A_2) \label{nullWn1}. \eeq

\noindent We may rewrite \eqref{m33,uode} in a simpler form using \eqref{nullodef3}

\beq \frac{m_{33,u}}{m_{33}}= \frac{ D_2(F_3)}{F_3}.  \label{nullm33} \eeq

\noindent This may be integrated to find $m_{33}$ in terms of $F_3$, assuming $m_{33,u} = 0$, however this is more useful as a differential equation.

\noindent The remaining two equations \eqref{mnr,u} and \eqref{m3r,u} are unchanged.

If $c_2 = 0$, then $X_1$ vanishes entirely and \eqref{ntl} implies that

\beq F_3^2 = 0. \eeq

\noindent If $X_1 = F_3 = 0$ then the Killing equations \eqref{1bcase1} -- \eqref{3b3case1} implies $F_2$ must be constant. That is, our Killing covector is a scalar multiple of $\ell$, so we will disregard this as well as Case 2.1 and 2.25.

The remaining cases 2.22 and 2.24 are just a repetition of the above equations with the added constraints that Proposition \eqref{prop:metricfn1} holds and $m_{mn,u} = 0$. In the first case where $F_3 = 0$, this changes the $W_n$

\beq W_n =  \int D_n(A_0) du + C_n(x^r) \label{nullWn2}. \eeq

\noindent No other metric functions are affected. When $F_3 \neq 0$, the additional constraints imply that \eqref{mnr,u} is satisfied trivially and \eqref{m3r,u} becomes

\beq F_{3,r} = 0. \label{nullm3r} \eeq

\noindent Lastly since $D_3(W_n) = 0$, equation \eqref{nullWn1} implies

\beq W_n  = - \int D_n(A_2) du + A_7(x^r). \label{nullWn3} \eeq

\noindent We summarize these results in the following table.

\begin{table}[ht]
\begin{center} 
\begin{tabular}{c|c c c c c c }
\hline
Case & $X_1$ & $F_2$ & $F_3$ & $m_{ie}$ & $H$ & $W_n$  \\ [0.5ex]
\hline
& & & & & & \\
1.1 & 1 & 0  & 0 & $m_{ie,u} = 0$  & \eqref{nullH0} & \eqref{nullWn0} \\ [1ex]
1.2 & 1 & $\frac{1}{2}F_3^2$ & $F_3$  & \eqref{nullm33}, \eqref{m3r,u} & \eqref{nullH1} &  \eqref{nullWn1} \\
 & & & & \eqref{mnr,u} & & \\ [1ex]
2.1 & 1 & 0  & 0 & $m_{ie,u} = 0$  & \eqref{nullH0} & \eqref{nullWn2} \\
 & & & & \eqref{mei} & & \\ [1ex]
2.2 & 1 & $\frac{1}{2}F_3^2$ & $F_{3,r} = 0$  & \eqref{nullm33}, \eqref{mei} & \eqref{nullH1} &  \eqref{nullWn3} \\
 & & & & $m_{nr,u}=0$ & & \\ [1ex]
\hline
\end{tabular}
\caption{ Constraints on $CCNV$ metric in order to allow null isometry. }
\end{center}
\label{table:nullkv}
\end{table}

If we wish to find $CSI$ $CCNV$ spacetimes admitting Killing vectors which are non-spacelike for all values of $v$, the above table will be helpful. The $CSI$ $CCNV$ spacetimes are the subcases of the above cases, where the transverse space is locally homogenous, allowing for a choice of coordinates where $m_{ie}$ is independent of $u$.

In case 1.1 and 2.1, none of the equations are affected by the vanishing of $m_{ie,u}$, while in case 1.2 and 2.2 equation \eqref{nullm33} is no longer applicable, instead we look to \eqref{m33,u} which implies that

\beq F_{3,3} = 0. \eeq

Unlike the previous cases,  where no other function is affected by our choice of coordinates, in case 1.2 \eqref{m3r,u} implies

\beq m_{mr,3} = 0. \eeq

\noindent From \eqref{mnr,u} we find the following differential equation for $F_3$

\beq D_n(log F_3) = \Gamma_{3n3}. \eeq

\noindent The remaining constraints on the metric functions follows as in section \ref{CSICCNV} where the Killing vector $X$ is of type A with some minor modifications due to $F_2 = \frac12 F_3^2$. We will do a simple example to illustrate.
\end{subsection}

\begin{subsection}{A simple example}

To simplify matters, we will assume that the transverse space is locally homogeneous and that $\Gamma_{3n3} $, $m_{3r,3}$ both vanish.  By Lemma \ref{lem:ccnvcsi} this will be a $CSI$ $CCNV$ spacetime and since \eqref{gef0}, \eqref{gef1}, \eqref{gef2} and \eqref{gef3} are all satisfied, it will also admit  a null Killing vector $X$ of the form

\beq X = n + \frac{ F^2_3 }{ 2 } \ell + F_3 m^3 \eeq

\noindent For brevity we will only consider the simpler subcase where it is assumed that the components $m_{3r}$ are independent of $x^3$. Expanding $\Gamma_{3n3}$ in terms of the transverse space frame matrix

\beq \Gamma_{3n3} = m_{3[r,3]}m_n^{~r}m_3^{~3} = -m_{33,r}m_n^{~r}m_3^{~3} \eeq

\noindent Then by multiplying this by $m^n_{~r}$ we find that

\beq m_{33,r} = 0. \eeq

\noindent This and equation \eqref{gef4} then implies that

\beq m_{33,3} m^3_{~r} = 0. \eeq

\noindent So that either $m_{33,3} $ or $m^3_{~r}$ must vanish. If $m^3_{~r} = 0$, the matrix $m_{ie}$ will take the form

\beq m_{33} = m_{33}(x^3) \nonumber \\
m_{3r} = 0 \label{mienullcsi1}\\
m_{nr} = m_{nr}(x^r). \nonumber \eeq

\noindent On the other hand, if $m_{33,3} = 0$, it will be a constant, say $M_{33}$, and then the matrix $m_{ie}$ is of the form

\beq m_{33} = M_{33} \nonumber \\
m_{3r} = m_{3r}(x^r) \label{mienullcsi2}\\
m_{nr} = m_{nr}(x^r). \nonumber \eeq

\noindent In either case, the choice does not affect the remaining metric functions and Killing vector components.

Noting \eqref{f33bye} in \eqref{ohgoodlord},  we may multiply by $m^n_{~s}$ to see that $F_3$ is at most a function of $u$. However, from equation \eqref{nullm33} we see that it must be a constant. By requiring $X$ to be null, we obtain  $F_2 = \frac{ F^2_3 }{ 2 } $, and so \eqref {1CSIf2} gives no new information. The Killing vector may then be written as

\beq n + \frac12 \ell + m^3, \eeq

\noindent subtracting the known Killing vector $\frac12 \ell$ we find the spacelike Killing vector, $Y = n + m^3$.
The metric function $H$ is found to be an arbitrary function of $u$ and $x^r$ by  \eqref{nullH1}, while $W_n$ is determined by the linear partial differential equation in \eqref{nullWn1}

\beq D_3W_n + D_2 W_n = D_nH. \eeq

\noindent Rewritting the above in coordinate form

\beq \hat W_{r,3} + \hat W_{r,u} = H_{,r}. \eeq

\noindent Applying the method of characteristics, the solution is written as

\beq \hat W_r = \frac{ 1 }{ \sqrt{2} } \int_L H(u, x^r) ds + g(x^3-u) \eeq

\noindent where $g$ is an arbitrary function of one variable and $L$ is the characteristic line segment from the $u$-axis to an arbitrary point $(x^3_0, u_0)$.

Thus we have found that in the subset of $CSI$ $CCNV$ spacetimes where $m_{33,r}$ and $\Gamma_{3n3}$ both vanish, there are no null Killing vectors other than $\ell$.  However, these spacetimes always admit the space-like Killing vector

\beq Y = n + m^3. \eeq

\end{subsection}

\end{section}

\begin{section}{Conclusions}

In this paper we have discussed the possibility of spacetimes with a covariantly constant null vector ($CCNV$) $\ell$  admitting an additional Killing vector distinct from $\ell$. Of particular interest are the set of $CNNV$ spacetimes with constant scalar curvature invariants ($CSI$ $CCNV$ spacetimes) containing another Killing vector $X$. Taking the general form for a $CCNV$ metric \eqref{CCNVKundt} we calculated the connection coefficients and Riemman tensor components. In doing so we found criteria for a $CCNV$ spacetime to be $CSI$; the transverse metric of \eqref{CCNVKundt} must be locally homogeneous.

With the connection coefficients calculated, we analyzed the Killing equations for a $CCNV$ metric. By performing a coordinate transform to eliminate $\hat W_3$ and another to rotate the frame so that $m_3$ is aligned with the spatial component of $X$, the equations simplify sufficiently in order to have expressions or equations for the metric functions ($H$, $\hat W_n$, $m_{ie}$) in terms of the Killing vector components, $X_a$. Thus we found all of the $CCNV$ spacetimes which admit a Killing vector $X$ and determined the three different forms $X$ may take; depending on whether $X_1$ is constant, linear in $u$ or a function of $u$ and $x^3$. These results are summarized in the two tables in section \ref{GeneralCCNV}. We examined the Killing Lie Algebra arising from $\{ \ell, X \} $.  The first two cases have closed Lie Algebras (in fact, they commute).  However, the last case may or may not be closed depending on the choice of $X_1(u, x^3)$.

By requiring that the transverse metric $g_{ef}$ is locally homogeneous, it is possible to make a coordinate transform so that it is independent of u (i.e. $g_{ef,u} = 0$), this causes the $B_{ij} = m_{ie,u}m_j^{~e}$ to vanish in the equations for the metric functions in section \ref{GeneralCCNV}, and so we find additional constraints on the $\hat W_n$, $H$. However, the equations involving $m_{ie}$ and $X_a$ must be reconsidered and the choice of transverse space may affect the choice of the Killing vector components $X_a$. In particular, if $X$ has a non-zero spatial component, $X_3$ and Proposition \eqref{prop:metricfn1} in subsection \ref{GeneralCCNVc2} does not hold, the transverse space must satisfy the following constraints

\beq m_{nr,3} &=& 0 \nonumber \\
 D_3(\Gamma_{3n3}) &=& 0 \nonumber \\
\Gamma^m_{~[3n]} \Gamma_{3m3} &=& 0 \nonumber \\
\Gamma^p_{~[nm]}\Gamma_{3p3} &=&  D_{n}\Gamma_{3m3} - D_{m}\Gamma_{3n3}. \nonumber \eeq

\noindent But if $X_3 = 0$, we may choose $g_{ef}$ to be any locally homogeneous space or if Proposition \eqref{prop:metricfn1} holds the transverse space must be locally homogenous and of the form \eqref{mei}. The tables at the end of section \ref{CSICCNV} summarize our results.

As an application of sectons \ref{GeneralCCNV} and \ref{CSICCNV} we studied the set of $CCNV$ spacetimes admitting a special Killing vector, $X$, that is non-spacelike for all values of the coordinate $v$. This requirement forces $X$ to be of type A; that is, $X_1$ is constant. As an illustration we presented a very simple example, a $CSI$ $CCNV$ spacetime admitting a null Killing vector where the transverse space satisfies $m_{3r,3} = 0$ and $\Gamma_{3n3} = 0$, in which the additional null Killing vector will be $n$ (up to the addition of a spacelike Killing vector).
%
%
%
%

\end{section}


\begin{thebibliography}{99}


\bibitem{Higher} A. Coley, R. Milson, V. Pravda and A. Pravdova, Class. Quant. Grav. {\bf 21}, 5519 (2004).

\bibitem{CSI}   A. Coley, S. Hervik and N. Pelavas, CSI spacetimes, Class. Quant. Grav. {\bf 23}, 3053 (2006).

\bibitem{CFHP} A. Coley, A. Fuster, S. Hervik and N. Pelavas, Class. Quant. Grav. {\bf 23}, 7431 (2006) [gr-qc/06110119].

\bibitem{4DVSI} V. Pravda, A. Pravdova, A. Coley and R. Milson, Class. Quant. Grav. {\bf 19}, 6213 (2002).

\bibitem{VSISUG} A. Coley, A. Fuster, S. Hervik and N. Pelavas, JHEP {\bf 32} (2007) [hep-th/0703256v1].

\bibitem{CFH}  A. Coley, A. Fuster and S. Hervik, Supergravity solutions with constant scalar invariants, [hep-th/0707.0957v1]

\bibitem{class}  A. Coley, R. Milson, V. Pravda and A. Pravdova, Class. Quant. Grav. {\bf 21}, L35 (2004).

\bibitem{string} D. Amati and C. Klim\v c\'\i k, Phys. Lett. {\bf B 219}, 443 (1989); G.T. Horowitz and A.R. Steif, Phys. Rev. Lett.

\bibitem{hortseyt} G. T. Horowitz  and A. A. Tseytlin, Phys. Rev.  D {\bf 51}, 2896 (1995).

\bibitem{tseytlin} J.~G.~Russo and A.~A.~Tseytlin, Nucl. Phys. B {\bf 448}, 293 (1995); A.~A.~Tseytlin, Class. Quant. Grav. {\bf 12}, 2365 (1995).

\bibitem{matsaev} R.R. Metsaev, Nucl. Phys. {\bf B625}, 70 (2002); R.~R.~Metsaev and A.~A.~Tseytlin, Phys. Rev.  D {\bf 65}, 126004 (2002).

\bibitem{RT}J.~G.~Russo and A.~A.~Tseytlin, JHEP {\bf 0204}, 021 (2002) \&{\bf 0209}, 035 (2002); M. Blau, J. Figueroa-O'Farrill, C. Hull and G. Papadopoulos JHEP {\bf 0201},  047 (2002); P. Meessen, Phys. Rev. D {\bf 65}, 087501 (2002); J.~Maldacena and L.~Maoz, JHEP {\bf 0212}, 046 (2002); M. Blau, J. Figueroa-O'Farrill, C. Hull and G. Papadopoulos JHEP {\bf 0201},047 (2002).

\bibitem{coley} A.A. Coley, Phys. Rev. Lett. {\bf 89}, 281601 (2002).

\bibitem{hommth} J. M. Figueroa-O'Farrill, P. Meessen and S. Philip, Class. Quant. Grav. {\bf 22}, 207 (2005).

\bibitem{iibkv} E. J. Hackett-Jones and D. J. Smith, JHEP {\bf 0411}, 029 (2004).

\bibitem{kumar} A. Kumar and H. K. Kunduri, Phys. Rev. D {\bf 70} (2004) [hep-th/104006] .

\bibitem{Gauntlett} J. Gauntlett, O.A.P. Mac Conamhna, T. Mateos, D. Waldram, 	Phys. Rev. D{\bf74}, 106007 (2006) .

\bibitem{strgyro} V. Frolov and F. Lin, Phys. Rev. {\bf D 73} 104028 (2006).

\bibitem{FZ} V. P. Frolov and A. Zelnikov, Phys. Rev.  D {\bf 72} (2005) [hep-th/104005] .

\bibitem{Caldarelli} M. Caldarelli, D. Klemm, E. Zorzan, Class.Quant.Grav. {\bf 24} 1341 (2007) [hep-th/0610126] . See also [hep-th/0411153].

\bibitem{membranes} M.J. Duff, H. Lu, C.N. Pope and E. Sezgin, Phys. Lett. B {\bf 371}, 206 (1996).

\bibitem{cardoso} V. Cardoso, O. J. C. Dias and J. P. S. Lemos,Phys. Rev. D {\bf 70} (2004) , [hep-th/024002v2].

\bibitem{prufer} F. Pr\"{u}fer, F. Tricerri, L. Vanhecke, Trans. Am. Math. Soc. {\bf 348}, 4643 (1996).

\bibitem{aleksee} D. V. Alekseevskii and B. N. Kimmel'fel'd, Functional Analysis and Its Applications, {\bf 9}, No. 2, 97-102 (1975).

\bibitem{mcnutt} D.D. McNutt, "`Higher Dimensional Spacetimes admitting a covariantly Constant Null vector"', Masters thesis (2008).





\end{thebibliography}
\end{document}